\documentclass[11pt]{article}
\usepackage{amsfonts,amsmath,amssymb,amsthm,graphicx,url}
\usepackage{color}
\usepackage{algorithm}
\usepackage{algorithmic}
\usepackage{relsize}
\usepackage[margin=1in]{geometry}

\newtheorem{definition}{Definition}
\newtheorem{theorem}{Theorem}
\newtheorem{lemma}{Lemma}

\DeclareMathOperator*{\argmax}{arg\,max}

\newcommand{\bE}{\mathbb{E}}

\newcommand{\bI}{\mathbb{I}}

\newcommand{\bR}{\mathbb{R}}

\newcommand{\cI}{{\cal I}}

\newcommand{\cM}{{\cal M}}






\title{Approximately Maximizing the Broker's Profit in a Two-sided Market}

\author{Jing Chen$^{\dag}$ \hspace{30pt} Bo Li$^{\dag}$ \hspace{30pt} Yingkai Li$^{\ddag}$\\
$^{\dag}$Department of Computer Science, Stony Brook University, USA\\
\texttt{\{jingchen, boli2\}@cs.stonybrook.edu}\\
$^{\ddag}$Department of Computer Science, Northwestern University, USA\\
\texttt{yingkai.li@u.northwestern.edu}}

\begin{document}

\maketitle

\begin{abstract}
We study how to maximize the broker's (expected) profit in a two-sided market, 
where she buys items from a set of sellers
and resells them to a set of buyers.
Each seller has a single item to sell and holds a private value on her item,
and each buyer has a valuation function over the bundles of the sellers' items.
We consider the Bayesian setting where the agents' values are independently drawn from prior distributions,
and aim at designing dominant-strategy incentive-compatible (DSIC) mechanisms that are approximately optimal.

{\em Production-cost markets}, where each item has a publicly-known cost to be produced,
provide a platform for us to study two-sided markets.
Briefly, we show how to covert a mechanism for production-cost markets into a mechanism for the broker,
whenever the former satisfies {\it cost-monotonicity}.
This reduction holds even when buyers have general combinatorial valuation functions.
When the buyers' valuations are additive, we
generalize an existing mechanism to production-cost markets
in an approximation-preserving way.
We then show that the resulting mechanism is cost-monotone and thus can be converted into
an 8-approximation mechanism for two-sided markets.

\end{abstract}

\section{Introduction}\sloppy
Two-sided markets are widely studied markets in economics
\cite{myerson1983efficient,mcafee1992dominant,mcafee2008gains},
where a number of buyers and a number of sellers are connected by an intermediary, 
such as antique markets, used-car markets, and pre-owned house markets. 
Here each seller has a single item to trade for money 
and holds a private value for her owned item,
while each buyer's private information is a general combinatorial valuation function over the bundles of the sellers' items.
A common feature in these situations is that
the intermediary keeps the difference between the payments made by the buyers and the payments made to the sellers ---that is, the intermediary's {\em profit}.
We call such an intermediary a {\em broker}.
The objective of the broker is to acquire the items from the sellers 
and resell them to the buyers to maximize her profit. 
The problem studied in our paper is to design the mechanism 
in the two-sided market that maximize the broker's profit.
For convenience, we refer to the sub-market between the sellers and the broker the {\em seller-side market}
and to the sub-market between the broker and the buyers the {\em buyer-side market}.

If the broker had all the items, then we would only have the buyer-side market, which is an auction
where the broker tries to maximize her revenue. Auctions have been well studied in the literature following the seminal work of Myerson \cite{myerson1981optimal}. In Section \ref{sec:related}, we will briefly recall the most relevant literature on auctions.
If the broker would keep the items, then we only have the seller-side market, which is a procurement game.
Budget feasible procurement
has been studied by many in the Algorithmic Game Theory literature \cite{singer2010budget,dobzinski2011mechanisms,chen2011approximability,chan2014truthful}.
The broker wants to maximize her value for the items she buys,
subject to a budget constraint.

Although auctions and procurements are closely related to the broker's problem,
they cannot be dealt with separately in two-sided markets.
Indeed, the difficulty of the broker's problem is to simultaneously and truthfully
elicit both the sellers' and the buyers' valuations, so as to generate a good profit.

\subsection{Main Results and Techniques}
In this paper we assume the values of the sellers and buyers are independently distributed,
and we study simple {\em dominant-strategy incentive compatible} (DSIC) mechanisms. 
To approximately maximize the (expected) profit of the broker, we first develop a reduction,
through which we can directly convert mechanisms for {\em production-cost markets} into mechanisms for two-sided markets.
In a production-cost market, 
the broker is able to produce all the items, each item has a cost to be produced and the costs are publicly known.
Roughly speaking, we say a mechanism for production-cost markets is {\em cost-monotone}
if, when the cost of an item increases, the likelihood that it is sold does not increase.
We show that any cost-monotone mechanism for production-cost markets can be converted into
a mechanism for two-sided markets via a black-box approach.
This reduction holds for general combinatorial valuation functions of buyers.


{\em Theorem \ref{thm:reduction:2} (Informal).
Any cost-monotone DSIC mechanism that is an $\alpha$-approximation for production-cost markets, 
can be converted into a DSIC mechanism that is an $\alpha$-approximation for two-sided markets.
}
%

Next, we use cost-monotonicity as a guideline in constructing concrete mechanisms for two-sided markets.
When the buyers have additive valuations,
we generalize the duality framework of \cite{cai2016duality} and the mechanism there to design
a cost-monotone mechanism for production-cost markets.
Following our reduction, we immediately obtain a mechanism for two-sided markets.

\smallskip

{\em Theorem \ref{thm:add} (Informal).
When the buyers have additive valuations,
there exists a DSIC mechanism for two-sided markets which is an 8-approximation to the optimal profit.
}

\subsection{Related Work}\label{sec:related}

{\em Bayesian auctions} have been extensively studied since the seminal work of \cite{myerson1981optimal}.
For single-parameter settings, Myerson's mechanism 
is optimal. The problem becomes more complicated in multi-parameter settings \cite{hart2017approximate}.
Although optimal Bayesian incentive-compatible (BIC) mechanisms have been characterized \cite{cai2012optimal,cai2012algorithmic},
they are too complex to be practical. Also, optimal DSIC mechanisms remain unknown.
Thus, simple DSIC mechanisms that are approximately optimal have been studied in the literature, 
such as \cite{kleinberg2012matroid,yao2015n,cai2016duality}


{\em Two-sided markets} are also called double auctions  \cite{mcafee1992dominant},
bilateral trading \cite{myerson1983efficient} or market intermediation \cite{jain2012ebay} in the literature.
Maximizing the broker's profit is an important objective for two-sided market.
The seminal paper \cite{myerson1983efficient} characterized the optimal mechanism for one seller and one buyer,
which is further generalized by \cite{deng2014revenue}
to multiple single-parameter sellers and buyers.
Unlike our work, \cite{deng2014revenue} studies the Bayesian Incentive Compatible (BIC) mechanisms. 
DSIC mechanisms are also studied in the literature, but only for some special cases:
\cite{jain2012ebay} studies the case of a single buyer and multiple sellers,
\cite{balseiro2019dynamic} studies the case of a single seller and multiple buyers,
and \cite{gerstgrasser2016revenue} studies the optimal mechanism when the numbers of sellers and buyers are both constants.
Although \cite{chan2016budget} studies two-sided markets with multiple buyers and multiple sellers,
the dealer there has a fixed budget and their mechanism guarantees that the payment to sellers is within the budget.
Before our work, it remained unknown how to design a (simple) DSIC mechanism that approximates the optimal profit
in multi-parameter settings with a general number of sellers and buyers.


Finally, we briefly discuss the efficiency of two-sided markets, which is measured by {\em gain-from-trade} (GFT),
i.e.,  the total value gained by the buyers minus the value contributed by the sellers.
\cite{mcafee1992dominant} gave the first approximation mechanism 
for the one seller and one buyer case, 
and \cite{cai2017approximating} gives approximation mechanisms for multiple buyers with unit demand valuations.
Recently, \cite{segal2018double} and \cite{segal2018muda} study the asymptotically efficient mechanisms instead of constant approximations.
For maximizing {\em social welfare}, \cite{colini2016approximately,colini2017approximately} provide constant-approximation mechanisms.

\section{Preliminaries}

A two-sided market includes a set $M$ of $m$ sellers,
and a set $N$ of $n$ buyers.
We consider the setting where each seller $j$ has one item $j$ to sell, so
we may refer to items and sellers interchangeably.
The total payment made by the buyers is the broker's {\em revenue}, and her {\em profit} is the revenue
minus the total payment to the sellers.

Each buyer $i$ has valuation $v^{B}_{i}: 2^{ M} \to \bR^{+} \cup \{0\}$ with $v^{B}_{i}(\emptyset)=0$.
The function $v^{B}_{i}$ is monotone: for any $T\subseteq S\subseteq M$, $v^{B}_{i}(T)\leq v^{B}_{i}(S)$.
In our reduction between production-cost and two-sided markets, we consider combinatorial valuations
and do not impose any restriction on $v^{B}_{i}$.

Each function $v^{B}_{i}$ is independently drawn from a distribution $D^{B}_{i}$ over the set of all possible valuation functions, with density function $f^{B}_{i}$ and cumulative probability $F^{B}_{i}$.
Let $D^{B} = \times_{i\in  N}D^{B}_{i}$, $f^{B} = \times_{i\in  N}f^{B}_{i}$ and $F^{B} = \times_{i\in  N}F^{B}_{i}$.
Each seller $j$'s value on her item, $v^{S}_{j}\in \bR^{+} \cup \{0\}$,
is independently drawn from a distribution $D^{S}_{j}$, with density function $f^{S}_{j}$ and cumulative probability $F^{S}_{j}$.
Let $D^{S} = \times_{j\in  M}D^{S}_{j}$, $f^{S} = \times_{j\in  M}f^{S}_{j}$ and $F^{S} = \times_{j\in  M}F^{S}_{j}$.
Let the supports of distributions $D^{B}_{i}$ and $D^{S}_{j}$ be $T^{B}_{i}$ and  $T^{S}_{j}$, respectively.
$T^{B}_{i}$ and  $T^{S}_{j}$ are called the {\em valuation spaces} of buyer $i$ and seller~$j$.
Let $T^{B}=\times_{i\in  N} T^{B}_{i}$ and $T^{S}=\times_{j\in  M} T^{S}_{j}$.
Finally, denote by $\cI = ( N, M,D^{B},D^{S})$ a two-sided market instance.

A mechanism $\cM$ for two-sided markets is 
represented by $(x^{B},x^{S},p^{B},p^{S})$.
Given a valuation profile $(v^{B},v^{S})$,
\begin{itemize}
\item $x^{B}(v^{B},v^{S})\triangleq(x^{B}_{i}(v^{B},v^{S}))_{i\in  N}$
is the allocation of the buyers, where $x^{B}_{i}(v^{B},v^{S})=(x^{B}_{iA}(v^{B},v^{S}))_{A \subseteq  M}$
with $x^{B}_{iA}(v^{B},v^{S})\in [0,1]$, representing the probability that buyer~$i$ gets the item set $A$,
under valuation profile $v^{B}$ and $v^{S}$.
Moreover, $\sum_A x^{B}_{iA} (v^{B},v^{S})= 1$.
\item $x^{S}(v^{B},v^{S})=(x^{S}_{j}(v^{B},v^{S}))_{j\in  M}$ is the allocation of the sellers with $x^{S}_{j}(v^{B},v^{S})\in [0,1]$,
representing the probability that seller $j$'s item is sold under $(v^{B}, v^{S})$.

\item $p^{B}(v^{B},v^{S})=(p^{B}_{i}(v^{B},v^{S}))_{i \in  N}$ is the payment of the buyers,
where $p^{B}_{i}(v^{B},v^{S}) \in \bR^{+}\cup \{0\}$.

\item $p^{S}(v^{B},v^{S})=(p^{S}_{j}(v^{B},v^{S}))_{j \in  M}$ is the payment to the sellers,
where $p^{S}_{j}(v^{B},v^{S}) \in \bR^{+}\cup \{0\}$.
\end{itemize}

A {\em feasible} mechanism $\cM$
is such that
$$\sum_{A \ni j} \sum_{i\in  N}x^{B}_{iA}(v^{B},v^{S}) \leq x^{S}_{j}(v^{B},v^{S})$$
for any item $j\in  M$ and any valuation profile $(v^{B},v^{S})$.
In principle, the above condition may allow a mechanism to sell an item that it didn't buy or to buy an item without selling it. However, these cases never happen in the mechanisms
in this paper.\footnote{
Note that our feasibility constraint only requires ``feasible in expectation'' which is weaker than ex post feasibility. 
All of our results still hold if we change the requirement to be ex post feasible.}
The expected profit $PFT(\cM;\cI)$ of mechanism $\cM$ for instance $\cI$ is
$$\mathop{\bE}\limits_{v^{S}\sim D^{S}; v^{B}\sim D^{B}}
\sum_{i\in  N} p^{B}_{i}(v^{B},v^{S})  - \sum_{j\in  M} p^{S}_{j}(v^{B},v^{S}).$$

The utilities of the agents are quasi-linear. That is,
for each buyer~$i$, for any valuation subprofile $v^{B}_{-i}$ of the buyers and
any valuation profile $v^{S}$ of the sellers,
when $i$ reports her true valuation function $v^{B}_{i}$, her utility under mechanism $\cM$ is
$$u^{B}_{i}(v^{B}_{i};\cM, v^{B}_{-i},v^{S}) = \sum_{A\subseteq M} x^{B}_{iA}(v^{B},v^{S})  v^B_i(A) - p^{B}_{i}(v^{B},v^{S}).$$
For each seller $j$, for any valuation subprofile $v^{S}_{-j}$ and $v^{B}$, 
when $j$ reports her true value $v^{S}_{j}$, her utility is
$$u^{S}_{j}(v^{S}_{j};\cM,v^{B},v^{S}_{-j}) = p^{S}_{j}(v^{B},v^{S}) - v^{S}_{j}x^{S}_{j}(v^{B},v^{S}).$$

Mechanism $\cM$ is {\em dominant-strategy incentive-compatible} (DSIC) if:
(1) for any buyer $i$, $v^{B}_{-i}$, $v^{S}$, and $v^{B}_{i}$, $v'^{B}_{i}$, 
\begin{eqnarray*}
u^{B}_{i}(v^{B}_{i};\cM,v^{B},v^{S}_{-i}) 
\geq \sum_{A\subseteq M} x^{B}_{iA}(v'^{B}_{i},v^{B}_{-i},v^{S}) v^B_i(A) - p^{B}_{i}(v'^{B}_{i},v^{B}_{-i},v^{S});
\end{eqnarray*}
and (2)  for any seller $j$, $v^{S}_{-j}$, $v^{B}$ and $v^{S}_{j}$,~$v'^{S}_{j}$, 
$$u^{S}_{j}(v^{S}_{j};\cM,v^{S}_{-j},v^{B}) \geq p^{S}_{j}(v^{B}, v'^{S}_{j},v^{S}_{-j}) - v^{S}_{j}x^{S}_{j}(v^{B}, v'^{S}_{j},v^{S}_{-j}).$$

Mechanism $\cM$ is {\em individually rational} (IR) if:
(1) for any buyer $i$, $v^B_i$, $v^{B}_{-i}$ and $v^{S}$,
$u^{B}_{i}(v^{B}_{i};\cM,v^{B}_{-i},v^{S}) \geq 0;$
and (2)  for any seller $j$, $v^S_j$, $v^{S}_{-j}$ and $v^{B}$,
$u^{S}_{j}(v^{S}_{j};\cM,v^{S}_{-j},v^{B}) \geq 0.$

Mechanism $\cM$ is {\em Bayesian incentive-compatible} (BIC) if~(1) for any buyer $i$ and valuation functions $v^{B}_{i}$, $v'^{B}_{i}$,
\begin{eqnarray*}
u^{B}_{i}(v^{B}_{i};\cM) &\triangleq&  \mathop\bE_{v^{B}_{-i}\sim D^{B}_{-i};v^{S}\sim D^{S}} u^{B}_{i}(v^{B}_{i};\cM,v^{B}_{-i},v^{S})\\
&\geq& \mathop\bE_{v^{B}_{-i}\sim D^{B}_{-i};v^{S}\sim D^{S}}
\Big[ \sum_{A\subseteq M} x^{B}_{iA}(v'^{B}_{i},v^{B}_{-i},v^{S}) v^B_i(A) 
- p^{B}_{i}(v'^{B}_{i},v^{B}_{-i},v^{S}) \Big];
\end{eqnarray*}
and (2)  for any seller $j$ and values $v^{S}_{j}$, $v'^{S}_{j}$,
\begin{eqnarray*}
 u^{S}_{j}(v^{S}_{j};\cM)
&\triangleq& \mathop\bE_{v^{B}\sim D^{B};v^{S}_{-j}\sim D^{S}_{-j}} u^{S}_{j}(v^{S}_{j};\cM,v^{B},v^{S}_{-j}) \\
&\geq& \mathop\bE_{v^{B}\sim D^{B};v^{S}_{-j}\sim D^{S}_{-j}}
\Big[ p^{S}_{j}(v^{B}, v'^{S}_{j},v^{S}_{-j}) - v^{S}_{j}x^{S}_{j}(v^{B}, v'^{S}_{j},v^{S}_{-j}) \Big].
\end{eqnarray*}


Mechanism $\cM$ is {\em Bayesian individually rational} (BIR) if (1) for any buyer $i$ and valuation function $v^{B}_{i}$,
$u^{B}_{i}(v^{B}_{i};\cM) \geq 0$;
and (2)  for any seller $j$ and value $v^{S}_{j}$,
$ u^{S}_{j}(v^{S}_{j};\cM) \geq 0$.

Finally, we denote by $OPT(\cI)$ the (expected) profit generated by the optimal DSIC mechanism for instance $\cI$.

A special case of two-sided markets is {\em production-cost markets},
where the broker can produce the items by himself
 and each item $j\in  M$ has a publicly known production cost $c_{j}\in \bR^{+}\cup\{0\}$.
Therefore we do not need to consider the sellers' incentives.
Letting $c\triangleq (c_{j})_{j\in  M}$,
we use $\cI^{c} = ( N, M,D^{B},c)$ to denote a production-cost market instance
and $\cM^{c}=(x^{B},p^{B})$ a production-cost market mechanism,
where the input of $x^{B}$ and $p^{B}$ is the buyers' valuation profile.
Then the broker's profit is the revenue minus the total production cost
$PFT(\cM^c;\cI^c)$, which is
$$
\mathop{\bE}\limits_{v^{B}\sim D^{B}}
\sum_{i\in  N} \left(p^{B}_{i}(v^{B})  - \sum_{A\subseteq M} \sum_{j\in  A} x^{B}_{iA}(v^{B})c_j\right) .
$$


Auctions are production-cost markets with cost 0.
We use $\cI^{a} = ( N, M, D^{B})$ to denote an auction instance
and $\cM^{a}=(x^{B},p^{B})$ an mechanism.
The expected revenue is
$PFT(\cM^a;\cI^a) = \bE_{v^{B}\sim D^{B}} \sum_{i\in  N} p^{B}_{i}(v^{B})$.
When there is no ambiguity, the superscript $B$ is omitted in auctions and production-cost markets.


In Section \ref{sec:cost:additive},
we will consider additive valuations for the buyers.
In this case,
for any buyer $i$, there exists a valuation vector
$(v^{B}_{ij})_{j\in  M}$ such that $v^{B}_{ij} = v^{B}(\{j\})$ is $i$'s value on each item~$j$.
Then, $v^{B}_i$ is {\em additive} if $v^{B}_{i}(A)=\sum_{j\in A} v^{B}_{ij}$ for any $A\subseteq  M$.
To simplify the notation, in this case
we use $v^{B}_{i}$ to denote the vector $(v^{B}_{ij})_{j\in M}$ instead of the corresponding function.
Each $v^{B}_{ij}$ is independently drawn from a distribution $D^{B}_{ij}$,
and $D^{B}_{i} = \times_{j\in M} D^{B}_{ij}$.
Finally, when buyers have additive valuations,
their allocation is simplified as
$x^{B}(v^{B},v^{S})\triangleq(x^{B}_{i}(v^{B},v^{S}))_{i\in  N}$,
where $x^{B}_{i}(v^{B},v^{S})=(x^{B}_{ij}(v^{B},v^{S}))_{j\in  M}$
with $x^{B}_{ij}(v^{B},v^{S})\in [0,1]$, representing the probability that buyer~$i$ gets the item $j$,
when the valuations are $v^{B}$ and $v^{S}$.

\section{A Reduction from Two-sided markets to Production-Cost Markets}
\label{sec:monotone}
Note that the sellers are single-parameter in the two-sided markets under consideration.
Thus, each seller is truthful in a mechanism if and only if the selling probability of her item
is non-increasing with respect to her value and
the payment to her is the threshold payment, i.e., the highest value such that her item can still be sold.
More precisely,
for any single-value distribution $D$ with density function $f$ and cumulative probability $F$,
if $D$ is a seller's value distribution, then the virtual value function is $\phi^{S}(v)=v+\frac{F(v)}{f(v)}$.
In addition, if $D$ is not regular then $\phi^S$ is the ironed virtual value.
Following \cite{myerson1983efficient}, for single-parameter sellers and
any DSIC mechanism $\cM=(x^{S},x^{B}, p^{S}, p^{B})$,
the total payment to the sellers is
the virtual social welfare of them, i.e.,
\begin{equation}\label{equ:myerson}
\mathop\bE\limits_{v^S\sim D^S} \sum_{j\in  M} p^S_j(v^B, v^S) = \mathop\bE\limits_{v^S\sim D^S} \sum_{j\in M}  \phi_j(v^S_j) x^{S}_{j}(v^{B}, v^{S})
\end{equation}
for any valuation profile $v^B$ of the buyers.

We now show how to convert a mechanism for production-cost markets into a two-sided market's mechanism.
The main idea is to use the sellers' virtual values in two-sided markets as costs,
and run the mechanism for production-cost markets.

\begin{definition}
A mechanism $\cM^{c} = (x, p)$ for production-cost markets is {\em cost-monotone} if
for any two instances $\cI^{c}=(N,M,D^{c}, c)$ and $\cI'^{c}=(N,M,D^{c}, c')$,
where $c$ and $c'$ differ only at an item~$j$ and $c_j\leq c_j'$,
for any buyers' valuation profile $v^{c}\sim D^{c}$,
the probabilities of item $j$ being sold under the two instances,
$x_j \triangleq \sum_{i\in N} \sum_{A\ni j} x_{iA}(v^c; \cI^c)$
and $x'_j \triangleq \sum_{i\in N} \sum_{A\ni j} x_{iA}(v^c; \cI'^c)$,
satisfy $x_j\geq x'_j$.
\end{definition}

{\bf Reduction.}
Let $\cI= (N, M, D^S, D^B)$ be a two-sided market instance.
For any valuation profile $v^{S}$ of the sellers,
denote by $\phi^{S}(v^{S}) \triangleq (\phi^{S}_j(v^{S}_{j}))_{j\in M}$  the sellers' virtual-value vector,
and let $\cI^{c}_{\phi^S(v^S)} = (N, M, D^B, \phi^{S}(v^S))$ be a production-cost market instance.

We first show that the optimal profit of the two-sided market is no more than
the optimal profit generated by the corresponding production-cost markets in expectation.

\begin{lemma}
\label{lem:reduction:2:1}
For any two-sided market instance $\cI=(N,M,D^{B},D^{S})$,
$OPT(\cI) \leq \bE_{v^S \sim D^S} OPT(\cI_{\phi^S(v^S)}^{c})$.
\end{lemma}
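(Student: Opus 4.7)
The plan is to take any DSIC mechanism $\cM = (x^B, x^S, p^B, p^S)$ for the two-sided market $\cI$ and, for each realized seller profile $v^S$, build a production-cost mechanism $\cM^c_{v^S}$ whose expected profit on the instance $\cI^c_{\phi^S(v^S)}$ is at least the ``slice'' of $\cM$'s profit at $v^S$. Averaging over $v^S$ and taking a supremum over $\cM$ then yields the lemma. Concretely, I would define $\cM^c_{v^S} = (x^B(\cdot, v^S), p^B(\cdot, v^S))$, i.e., freeze the sellers' reports at $v^S$ and use the buyer-side allocation and payment rules of $\cM$ as the production-cost mechanism on the buyers. Because $\cM$ is DSIC for each buyer against every fixed $(v^B_{-i}, v^S)$, the frozen mechanism inherits DSIC for buyers; and the feasibility constraint of $\cM$ implies each item is allocated at most $x^S_j(v^B, v^S) \le 1$ times, so $\cM^c_{v^S}$ is feasible in the production-cost market.

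Next I would compute
\[
PFT(\cM^c_{v^S}; \cI^c_{\phi^S(v^S)})
= \mathop{\bE}\limits_{v^B \sim D^B}\Big[\sum_{i\in N} p^B_i(v^B, v^S) \;-\; \sum_{j\in M} \phi^S_j(v^S_j)\sum_{i\in N}\sum_{A\ni j} x^B_{iA}(v^B, v^S)\Big],
\]
after swapping the order of summation in the cost term. Since $v^S_j \ge 0$ and $F^S_j(v^S_j)/f^S_j(v^S_j) \ge 0$, the (possibly ironed) virtual value $\phi^S_j(v^S_j)$ is nonnegative, so feasibility $\sum_i \sum_{A\ni j} x^B_{iA}(v^B, v^S) \le x^S_j(v^B, v^S)$ can be multiplied through and gives
\[
PFT(\cM^c_{v^S}; \cI^c_{\phi^S(v^S)}) \;\ge\; \mathop{\bE}\limits_{v^B \sim D^B}\Big[\sum_{i\in N} p^B_i(v^B, v^S) \;-\; \sum_{j\in M} \phi^S_j(v^S_j)\, x^S_j(v^B, v^S)\Big].
\]

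Taking expectations over $v^S \sim D^S$ and invoking Myerson's identity \eqref{equ:myerson} (applied pointwise in $v^B$ and then averaged) to replace $\bE \sum_j \phi^S_j(v^S_j) x^S_j(v^B, v^S)$ by $\bE \sum_j p^S_j(v^B, v^S)$ yields
\[
\mathop{\bE}\limits_{v^S \sim D^S} PFT(\cM^c_{v^S}; \cI^c_{\phi^S(v^S)}) \;\ge\; PFT(\cM; \cI).
\]
Since $\cM^c_{v^S}$ is a feasible DSIC mechanism for $\cI^c_{\phi^S(v^S)}$, each summand on the left is at most $OPT(\cI^c_{\phi^S(v^S)})$, so $\bE_{v^S} OPT(\cI^c_{\phi^S(v^S)}) \ge PFT(\cM; \cI)$. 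Taking the supremum over DSIC $\cM$ finishes the proof.

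The main subtlety I anticipate is justifying the use of Myerson's formula for the sellers' payments exactly as stated in \eqref{equ:myerson}: this requires that DSIC on the seller side, together with the single-parameter structure, gives a monotone allocation rule $x^S_j(v^B, \cdot)$ and threshold payments for every fixed $v^B$, so that the virtual-welfare identity applies conditional on $v^B$; taking the further expectation over $v^B \sim D^B$ then produces the two-dimensional expectation needed. A minor but essential check is the nonnegativity of $\phi^S_j$ (including in the ironed case, where ironing preserves nonnegativity of an initially nonnegative function on $[0,\infty)$), which is what makes the feasibility inequality go in the profitable direction.
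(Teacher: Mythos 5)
Your proof is correct and follows essentially the same route as the paper: freeze the sellers' profile $v^S$, use $\cM$'s buyer-side allocation and payment rules as a production-cost mechanism, then combine feasibility ($\sum_i\sum_{A\ni j} x^B_{iA} \le x^S_j$, together with $\phi^S_j \ge 0$) with Myerson's payment identity \eqref{equ:myerson} for the single-parameter sellers. The only cosmetic difference is that you define a family $\{\cM^c_{v^S}\}$ indexed directly by $v^S$, whereas the paper defines a single $\cM^c$ that, given the cost vector $\phi^S(v^S)$, first samples a randomized pre-image $v^S$ from $D^S$ conditioned on its virtual value; your indexing sidesteps the non-injectivity of $\phi^S$ under ironing that forces the paper to introduce the conditional expectation $\bE_{v^S\sim D^S\mid\phi^S(v^S)}$, and your explicit check that seller virtual values $\phi^S_j(v) = v + F^S_j(v)/f^S_j(v) \ge 0$ (and remain so after ironing) is a step the paper uses implicitly.
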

\begin{proof}

It suffices to show that for any DSIC mechanism $\cM=(x^{S},x^{B}, p^{S}, p^{B})$ for two-sided markets,
there exists a DSIC mechanism $\cM^{c}$ for production-cost markets such that
$PFT(\cM;\cI) \leq \bE_{v^{S}\sim D^{S}} PFT(\cM^{c};\cI^{c}_{\phi^S(v^{S})})$.
Indeed, this would imply $PFT(\cM;\cI) \leq \bE_{v^{S}\sim D^{S}} OPT(\cI^{c}_{\phi^S(v^{S})})$ for any $\cM$,
and thus $OPT(\cI)\leq \bE_{v^{S}\sim D^{S}} OPT(\cI^{c}_{\phi^S(v^{S})})$.

Given $\cM$ and $\cI$, we define mechanism $\cM^{c}=(x^{c}, p^{c})$ as follows.
For any instance $\cI^{c}_{\phi^S(v^{S})}$,
$\cM^c$ first computes~$v^S$, the (randomized) {\em pre-image} of $\phi^S(v^S)$ with respect to $D^S$.
In particular,
if for some seller $j$, the (ironed) virtual value $\phi^S_j(v^S_j)$
corresponds to a value interval in the support of $D^S_j$,
then $v^S_j$ is randomly sampled from $D^S_j$ conditional on it belongs to this interval.


For any reported valuation profile $v^{B}$ and buyer $i\in N$,
$$
x^{c}_{iA}(v^{B}) = x^{B}_{iA}(v^{B},v^{S})
$$
for any $A\subseteq M$, and
$$
p^{c}_{i}(v^{B}) = p^{B}_{i}(v^{B}, v^{S}).
$$
It is easy to see that, given any $v^S$ and $v^B_{-i}$, for any true valuation~$v^B_i$,
buyer $i$ has the same utility in $\cM^{c}$ and $\cM$ by reporting the same~$v'^B_i$.
Thus $\cM^{c}$ is DSIC whenever $\cM$ is DSIC.
Next, we lower-bound the profit of $\cM^{c}$ for each instance $\cI^{c}_{\phi^S(v^{S})}$.


\begin{eqnarray*}
&&PFT(\cM^{c}; \cI^{c}_{\phi^S(v^{S})}) \\
&=&
\mathop\bE\limits_{v^B \sim D^B}
\sum_{i\in  N} \left(p^{c}_i(v^B)
- \sum_{A \subseteq M} x^{c}_{iA}(v^B)  \sum_{j \in A}\phi^S_j(v^S_j)\right)\\
&=&
\mathop\bE\limits_{v^B \sim D^B} \mathop\bE\limits_{v^S\sim D^S|\phi^S(v^S)}
\left(\sum_{i\in  N} p^B_i(v^B, v^{S})
- \sum_{j\in M}  \sum_{i \in N} \sum_{A \ni j} x^B_{iA}(v^B, v^{S})\phi^S_j(v^S_j) \right) \\
&\geq &
\mathop\bE\limits_{v^B \sim D^B} \mathop\bE\limits_{v^S\sim D^S|\phi^S(v^S)}
\left(\sum_{i\in  N} p^B_i(v^B, v^{S}) 
- \sum_{j\in M}\phi^S_j(v^S_j)
  x^{S}_{j}(v^{B},v^{S})\right)
\end{eqnarray*}

The inequality above is because 
$\sum_{i\in  N}\sum_{A\ni j} x^{B}_{iA}(v^{B},v^{S}) \leq x^{S}_{j}(v^{B},v^{S})$ holds
for any feasible mechanism, any $j\in  M$ and any valuation profiles $v^{B},v^{S}$.
Thus,

\begin{eqnarray*}
&& \mathop\bE\limits_{v^S \sim D^S}  PFT(\cM^{c}; \cI^{c}_{\phi^S(v^{S})}) 
= \mathop\bE\limits_{\phi^S(v^S)\sim \phi^S(D^S)}  PFT(\cM^{c}; \cI^{c}_{\phi^S(v^{S})}) \\
&\geq &
\mathop\bE\limits_{v^B \sim D^B} \mathop\bE\limits_{v^{S} \sim D^S}
\left(\sum_{i\in  N} p^B_i(v^B, v^{S}) 
- \sum_{j\in M}  \phi_j(v^S_j) x^{S}_{j}(v^{B}, v^{S})\right)\\
&=&
\mathop\bE\limits_{v^S \sim D^S, v^B \sim D^B}
\left(\sum_{i\in  N} p^B_i(v^B, v^S) - \sum_{j\in  M} p^S_j(v^B, v^S)\right)
=PFT(\cM,\cI),
\end{eqnarray*}
as desired. Here $\phi^S(D^S)$ is the distribution of virtual values induced by $D^S$, and the second equality is by Equation \ref{equ:myerson}.
\end{proof}

In the following, we show that if a mechanism for production-cost markets is cost-monotone,
then it can be converted into a mechanism for two-sided markets. 

\begin{lemma}
\label{lem:reduction:2:2}
Given any DSIC cost-monotone mechanism $\cM^{c}$ for production-cost markets,
there exists a DSIC mechanism $\cM$ for two-sided markets such that
$$PFT(\cM; \cI) = \bE_{v^{S}\sim D^{S}} PFT(\cM^{c}; \cI^{c}_{\phi^S(v^{S})}).$$
\end{lemma}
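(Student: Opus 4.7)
The plan is to construct $\cM = (x^B, x^S, p^B, p^S)$ by wrapping $\cM^c$ in a ``virtual-cost'' shell. Given a reported profile $(v^B, v^S)$, first compute the sellers' virtual-value vector $\phi^S(v^S)$, form the production-cost instance $\cI^c_{\phi^S(v^S)}$, and run $\cM^c$ on it with the buyers' reports $v^B$. Set the buyer allocation and payment to match those of $\cM^c$: $x^B_{iA}(v^B, v^S) = x^c_{iA}(v^B;\, \cI^c_{\phi^S(v^S)})$ and $p^B_i(v^B, v^S) = p^c_i(v^B;\, \cI^c_{\phi^S(v^S)})$. For each seller $j$, let $x^S_j(v^B, v^S) = \sum_{i \in N} \sum_{A \ni j} x^c_{iA}(v^B;\, \cI^c_{\phi^S(v^S)})$, and define $p^S_j$ to be Myerson's threshold payment derived from this allocation rule. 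Feasibility of $\cM$ is inherited from $\cM^c$ (each item is sold to buyers with probability at most its ``produced'' probability in $\cM^c$, which equals $x^S_j$ by construction).

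For buyer-DSIC, I would observe that when $v^B_{-i}$ and $v^S$ are fixed, buyer $i$'s outcome in $\cM$ as a function of her report is identical to her outcome in $\cM^c$ on the production-cost instance $\cI^c_{\phi^S(v^S)}$, because $\phi^S(v^S)$ does not depend on $v^B_i$. Hence DSIC of $\cM^c$ transfers directly. For seller-DSIC, Myerson's single-parameter characterization reduces everything to showing that $x^S_j(v^B, v^S)$ is non-increasing in $v^S_j$ when $(v^B, v^S_{-j})$ is fixed; the threshold payments $p^S_j$ then deliver both DSIC and IR on the seller side. The (ironed) virtual value $\phi^S_j$ is non-decreasing in its argument by construction, so raising $v^S_j$ weakly raises the cost of item $j$ in $\cI^c_{\phi^S(v^S)}$, and cost-monotonicity of $\cM^c$ yields the desired weak decrease in $x^S_j$. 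This monotonicity step is the only place where the hypothesis is invoked and is the main technical point of the plan; without cost-monotonicity the construction would simply not be incentive-compatible on the seller side.

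For the profit identity, I would expand
$$PFT(\cM; \cI) = \mathop{\bE}_{v^B \sim D^B,\, v^S \sim D^S}\Big[\sum_{i\in N} p^B_i(v^B, v^S) - \sum_{j\in M} p^S_j(v^B, v^S)\Big],$$
and apply Myerson's payment identity (Equation~\ref{equ:myerson}) to the single-parameter sellers to replace the expected seller payments by the expected virtual welfare $\mathop{\bE}_{v^B, v^S}\sum_{j} \phi^S_j(v^S_j)\, x^S_j(v^B, v^S)$. Substituting the definitions $p^B_i = p^c_i$ and $x^S_j = \sum_{i,\, A \ni j} x^c_{iA}$, the expression collapses to
$$\mathop{\bE}_{v^S}\mathop{\bE}_{v^B}\Big[\sum_{i} p^c_i(v^B) - \sum_{i}\sum_{A} x^c_{iA}(v^B)\sum_{j\in A}\phi^S_j(v^S_j)\Big],$$
which is exactly $\mathop{\bE}_{v^S} PFT(\cM^c;\, \cI^c_{\phi^S(v^S)})$ by the definition of profit in production-cost markets. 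The only non-routine step is the seller-side monotonicity argument; once cost-monotonicity pins that down, DSIC and the profit identity follow from the construction and Myerson's lemma.
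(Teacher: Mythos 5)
Your construction, DSIC argument (buyer side via $\phi^S$'s independence from $v^B_i$, seller side via cost-monotonicity plus monotone ironed virtual values plus Myerson threshold payments), and profit identity via Equation~\ref{equ:myerson} all coincide with the paper's proof of Lemma~\ref{lem:reduction:2:2}. The proposal is correct and takes essentially the same approach as the paper.
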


\begin{proof}
Given mechanism $\cM^{c}=(x^{c}, p^{c})$,
the mechanism $\cM=(x^{S},x^{B}, p^{S}, p^{B})$ is defined as follows:
$\cM$ first collects $v^{B}$ and $v^{S}$ reported by the buyers and the sellers, and then
 run $\cM^{c}$
on the production-cost instance $\cI^{c}_{\phi^S(v^S)} = (N, M, D^B, \phi^{S}(v^S))$ to obtain $x^{c}(v^B)$ and $p^{c}(v^B)$.
Then for each buyer $i$, let
$$x^{B}_{iA}(v^B, v^S)=x^{c}_{iA}(v^B)$$
for any $A\subseteq M$ and
$$p^{B}_{i}(v^B, v^S)=p^{c}_{i}(v^B).$$
For each seller $j$, let
$$x^{S}_{j}(v^{B},v^{S}) = \sum_{i\in  N} \sum_{A \ni j} x^{c}_{iA}(v^S, v^B)$$
and let $p^S_j(v^B, v^S)$ be the threshold payment for $j$:
namely, the highest reported value of seller $j$ such that the probability that
item~$j$ is bought by the broker is $x^{S}_{j}(v^{B},v^{S})$.

We claim that $\cM$ is DSIC.
First, the buyers will truthfully report their valuations because $\cM^{c}$ is DSIC
and each buyer has the same allocation and payment in $\cM$ and $\cM^{c}$.
For the sellers,
since $\cM^{c}$ is cost-monotone and each (ironed) virtual value function $\phi^S_j$ is non-decreasing in $v^S_j$,
the allocation $x^S_j$ is non-increasing in~$v^S_j$.
As the payments to the sellers are the threshold payments,
the sellers are truthful as well.

Next we show that 
\begin{eqnarray*}
&&PFT(\cM, \cI) 
= \mathop\bE\limits_{v^S \sim D^S, v^B \sim D^B}
\left(\sum_{i\in  N} p^{B}_i(v^B, v^S) - \sum_{j\in  M} p^{S}_j(v^B, v^S) \right) \\
&=&\mathop\bE\limits_{v^S \sim D^S, v^B \sim D^B}
\left(\sum_{i\in  N} p^{B}_i(v^B, v^S) 
- \sum_{j \in  M} x^{S}_{j}(v^{B},v^{S}) \phi_j(v^S_j)\right) \\
&=&\mathop\bE\limits_{v^B \sim D^B} \mathop\bE\limits_{\phi^S(v^S)\sim \phi^S(D^S)}
\left(\sum_{i\in  N} p^{c}_i(v^B) 
- \sum_{j \in  M}  \sum_{i\in  N} \sum_{A \ni j} x^{c}_{iA}(v^B) \phi_j(v^S_j)\right) \\
&=& \mathop\bE\limits_{\phi^S(v^S)\sim \phi^S(D^S)}
\mathop\bE\limits_{v^B \sim D^B}
\sum_{i\in  N} \left(p^{c}_i(v^B) 
- \sum_{A\subseteq M} \sum_{j \in A} x^{c}_{iA}(v^B) \phi_j(v^S_j)\right) \\
&=& \mathop\bE\limits_{v^S \sim D^S} PFT(\cM^{c}; \cI^{c}_{\phi^S(v^{S})}).
\end{eqnarray*}
Thus Lemma \ref{lem:reduction:2:2} holds.
\end{proof}

Combining Lemmas \ref{lem:reduction:2:1} and \ref{lem:reduction:2:2},
we get our first main result.
\begin{theorem}\label{thm:reduction:2}
Given any DSIC mechanism $\cM^{c}$ for production-cost markets,
if $\cM^{c}$ is cost-monotone and is an $\alpha$-approximation to the optimal profit,
then there exists a DSIC mechanism
$\cM$ for two-sided markets that is an $\alpha$-approximation to the optimal profit.
\end{theorem}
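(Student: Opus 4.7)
The plan is to chain the two preceding lemmas together; the theorem is essentially a direct corollary once both inequalities are in hand, so most of the work has already been done.

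First I would fix an arbitrary two-sided market instance $\cI = (N, M, D^B, D^S)$ and invoke Lemma \ref{lem:reduction:2:2} on the given cost-monotone DSIC production-cost mechanism $\cM^c$. This yields a DSIC mechanism $\cM$ for two-sided markets whose expected profit equals $\bE_{v^S \sim D^S}\,PFT(\cM^c;\cI^c_{\phi^S(v^S)})$. The DSIC guarantee transfers for free from Lemma \ref{lem:reduction:2:2}, so the remaining task is purely to bound this quantity from below by $\tfrac{1}{\alpha}OPT(\cI)$.

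Next I would apply the $\alpha$-approximation hypothesis of $\cM^c$ pointwise: for every realized $v^S$, the instance $\cI^c_{\phi^S(v^S)}$ is a legitimate production-cost instance (with costs equal to the sellers' virtual values), and therefore $PFT(\cM^c;\cI^c_{\phi^S(v^S)}) \geq \tfrac{1}{\alpha}\,OPT(\cI^c_{\phi^S(v^S)})$. Taking expectation over $v^S \sim D^S$ preserves the inequality, giving
\[
PFT(\cM;\cI) \;=\; \mathop{\bE}_{v^S \sim D^S} PFT(\cM^c;\cI^c_{\phi^S(v^S)}) \;\geq\; \frac{1}{\alpha}\,\mathop{\bE}_{v^S \sim D^S} OPT(\cI^c_{\phi^S(v^S)}).
\]
Finally I would apply Lemma \ref{lem:reduction:2:1}, which states $OPT(\cI) \leq \bE_{v^S \sim D^S}\,OPT(\cI^c_{\phi^S(v^S)})$, to conclude $PFT(\cM;\cI) \geq \tfrac{1}{\alpha}\,OPT(\cI)$.

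There is no real obstacle here, since both lemmas have already discharged the substantive content: Lemma \ref{lem:reduction:2:2} provides the DSIC lifting and the profit-equality, and Lemma \ref{lem:reduction:2:1} provides the upper bound on $OPT(\cI)$ in terms of the production-cost optima. The only subtlety worth flagging explicitly is that the $\alpha$-approximation guarantee of $\cM^c$ must hold for every production-cost instance (in particular, the non-negativity of virtual values is not required — one should note that the framework above permits the costs $\phi^S_j(v^S_j)$ to take arbitrary real values, which is consistent with how $\cI^c_{\phi^S(v^S)}$ is used in Lemma \ref{lem:reduction:2:1}). With this remark the theorem follows in a few lines.
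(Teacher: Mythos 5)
Your proof is correct and matches the paper's argument exactly: instantiate the mechanism from Lemma \ref{lem:reduction:2:2}, apply the $\alpha$-approximation guarantee pointwise in $v^S$, and finish with Lemma \ref{lem:reduction:2:1}. One small correction to your closing remark: the seller virtual value $\phi^S_j(v^S_j) = v^S_j + F^S_j(v^S_j)/f^S_j(v^S_j)$ is automatically non-negative when $v^S_j \ge 0$, so $\cI^c_{\phi^S(v^S)}$ is always a legitimate production-cost instance in the paper's sense (costs in $\bR^+ \cup \{0\}$) and no extension to negative costs is actually needed.
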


\begin{proof}
Mechanism $\cM$ is defined as in Lemma \ref{lem:reduction:2:2}.
For any two-sided market instance $\cI$,
\begin{eqnarray*}
PFT(\cM; \cI) &=& \mathop\bE\limits_{v^{S}\sim D^{S}} PFT(\cM^{c}; \cI^{c}_{\phi^S(v^{S})}) 
\geq \frac{1}{\alpha}\mathop\bE\limits_{v^S \sim D^S} OPT(\cI_{\phi^S(v^S)}^{c}) \geq \frac{1}{\alpha}OPT(\cI),
\end{eqnarray*}
where the equality is by Lemma \ref{lem:reduction:2:2} and
the last inequality is by Lemma~\ref{lem:reduction:2:1}.
\end{proof}

%
%
%

\section{A Mechanism for Two-Sided Markets with Additive Valuations}
\label{sec:cost:additive}

\subsection{Broker's Profit in Production-Cost Markets}
We first design a mechanism $\cM_{A}$ for production-cost markets
which is an 8-approximation of the optimal profit.
Our mechanism is inspired by the mechanism in \cite{yao2015n} and
the duality framework in \cite{cai2016duality} for auctions.
In particular, with probability $\frac{3}{4}$, $\cM_{A}$ runs the mechanism of \cite{myerson1983efficient} for two-sided markets for each item separately, denoted by $\cM_{IT}$. 
The mechanism of \cite{myerson1983efficient} is for a single buyer and a single seller,
but can be generalized to multiple buyers and a single seller as shown in \cite{deng2014revenue}.
Furthermore, $\cM_A$ generalizes
the bundling VCG mechanism of \cite{yao2015n}
to production-cost markets (denoted by $\cM_{BVCG}$) and runs it with probability $\frac{1}{4}$.
%

Essentially, Mechanism $\cM_{IT}$ runs a second-price auction on the buyers' virtual values, with a reserve price which is the production cost of the item.
As shown in \cite{myerson1983efficient,deng2014revenue}, this mechanism is optimal for the broker's profit when the buyers have single-parameter valuations.
Mechanism $\cM_{BVCG}$ is well studied in auctions \cite{yao2015n,cai2016duality},
and we describe it in Mechanism \ref{alg:bvcg} for production-cost markets $\cI^c = (N, M, D, c)$.
Essentially, it is a VCG mechanism with per-item reserve prices and per-agent entry fees.

\begin{algorithm}[htbp]
\floatname{algorithm}{Mechanism}
  \caption{\hspace{-3pt}  $\cM_{BVCG}$ for Production-Cost Markets}
 \label{alg:bvcg}
  \begin{algorithmic}[1]
\STATE Collect the valuation profile $v$ from the buyers.
\STATE For any buyer $i$ and item $j$,
 let $P_{ij}(v_{-i}) \triangleq \max_{i'\neq i} v_{i'j}$
and $\beta_{ij}(v_{-i}) \triangleq \max\{P_{ij}(v_{-i}), c_{j}\}$. \label{step:beta}

\STATE For any buyer $i$,
set the reserve price for item $j$ to be $\beta_{ij}(v_{-i})$.
Set the entry fee $e_i(v_{-i})$ to be the median of the random variable $\sum_{j\in  M}(t_{ij} - \beta_{ij}(v_{-i}))^{+}$,
where $t_{i}=(t_{ij})_{j\in M}\sim D_{i}$ and $x^+\triangleq \max\{x, 0\}$ for any $x\in \bR$. \label{step:entry-fee}

\STATE Each buyer $i$ is considered to accept her entry fee if and only if
 $\sum_{j\in  M}(v_{ij} - \beta_{ij}(v_{-i}))^{+}\geq e_i(v_{-i})$.

\STATE
If a buyer $i$ accepts her entry fee, then she gets the set of items $j$ with $v_{ij} \geq \beta_{ij}(v_{-i})$,
and her price is
$e_i(v_{-i}) + \sum_{j: v_{ij} \geq \beta_{ij}(v_{-i})} \beta_{ij}(v_{-i})$.
If $i$ does not accept her entry fee, then she gets no item and pays 0.


\end{algorithmic}
\end{algorithm}

It is not hard to see that both $\cM_{IT}$ and $\cM_{BVCG}$
are DSIC and IR. Indeed, the mechanism of \cite{myerson1983efficient} is DSIC and IR,
$\cM_{IT}$ directly applies it to each item, and the buyers have additive valuations
across the items.
Moreover, $\cM_{BVCG}$ is DSIC and IR with respect to any reserve prices $\beta_{ij}$
that do not depend on $v_{ij}$, and Mechanism  \ref{alg:bvcg} simply incorporates
the production costs into reserve prices.


In Theorem \ref{thm:cost} we use $\cM_{A}$ to upper-bound the optimal profit for any
production-cost instance $\cI^c = (N, M, D, c)$, with proof provided in the full version.
In fact, this proof is similar to the proof in \cite{cai2016duality}
with modifications to incorporate the production costs into consideration.
Note that \cite{cai2017approximating} also adapts the framework of \cite{cai2016duality} to the 2-sided market.
But their goal is to maximize the gain from trade and the buyers have unit-demand valuations.

\begin{theorem}[\cite{cai2017approximating}]
\label{thm:cost}
When the buyers have additive valuations,
Mechanism $\cM_{A}$ is DSIC and is an 8-approximation to the optimal profit for production-cost markets.
\end{theorem}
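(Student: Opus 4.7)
The plan is to follow the duality framework of \cite{cai2016duality}, adapting it so that production costs enter the benchmark in the right place, and then show that each part of the resulting decomposition is matched by one of the two components of $\cM_A$.

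\textbf{Step 1 (DSIC).} First I would observe that $\cM_{IT}$ is DSIC and IR because it is a collection of per-item Myerson mechanisms on additive buyers, with the cost $c_j$ acting as a public reserve (this is where the \cite{myerson1983efficient,deng2014revenue} per-item optimality is used). Then $\cM_{BVCG}$ is DSIC and IR because both the per-item reserve $\beta_{ij}(v_{-i})$ and the entry fee $e_i(v_{-i})$ in Mechanism~\ref{alg:bvcg} depend only on $v_{-i}$, so each buyer faces a take-it-or-leave-it menu given others' reports. Since a convex combination of DSIC mechanisms is DSIC, $\cM_A$ inherits both properties.

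\textbf{Step 2 (Benchmark from duality).} Following \cite{cai2016duality}, I would write the broker's profit as a linear program over the space of DSIC interim-feasible allocations, and plug in the canonical Lagrangian dual that turns the objective into a virtual-welfare-like expression. The only modification is that whenever buyer $i$ gets item $j$, the per-item contribution becomes $\tilde{\varphi}_{ij}(v_{ij}) - c_j$ instead of the pure virtual value. Splitting the event $\{v_{ij} \ge \beta_{ij}(v_{-i})\}$ along a threshold larger than $\beta_{ij}$ yields the familiar three terms
\[
OPT(\cI^c) \ \le \ \mathrm{SINGLE} \ + \ \mathrm{TAIL} \ + \ \mathrm{CORE},
\]
where $\mathrm{SINGLE}$ is the sum over items of the optimal single-item profit with reserve $c_j$, and $\mathrm{TAIL}/\mathrm{CORE}$ correspond to events where some buyer's value for some item is very large, respectively moderate, relative to $\beta_{ij}$.

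\textbf{Step 3 (Matching the two components).} For the SINGLE part, $\cM_{IT}$ runs the mechanism of \cite{myerson1983efficient,deng2014revenue} independently on each item with the cost $c_j$ built in; by per-item optimality its profit is at least $\mathrm{SINGLE}$. For TAIL, a standard Markov-type argument as in \cite{cai2016duality} shows that $\cM_{IT}$ also dominates $\mathrm{TAIL}$ up to a constant factor, because a tail event on item $j$ already produces a high payment in the single-item auction with reserve $c_j$. For CORE, I would use the bundling-VCG analysis of \cite{yao2015n,cai2016duality}: the median entry fee $e_i(v_{-i})$ extracts (in expectation) at least half the surplus of the random variable $\sum_j (v_{ij}-\beta_{ij}(v_{-i}))^+$, and the choice $\beta_{ij}=\max\{P_{ij},c_j\}$ guarantees that every allocated item is sold above cost, so the entry fee contributes to profit rather than just revenue. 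Adding up, $\cM_{BVCG}$ is a constant-factor approximation to CORE.

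\textbf{Step 4 (Combining).} Tracking the constants from \cite{cai2016duality} (with the cost-adjusted reserves), the $\tfrac{3}{4}/\tfrac{1}{4}$ mixture is chosen exactly so that $3\,\mathrm{SINGLE} + \mathrm{CORE} + \mathrm{TAIL}$ is paid for evenly, yielding the factor $8$.

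\textbf{Main obstacle.} The routine parts are DSIC and the handling of SINGLE/TAIL, both of which adapt almost verbatim from the auction case. The subtle step is in the CORE bound: one must check that replacing the Cai--Devanur--Weinberg reserve $P_{ij}(v_{-i})$ by $\max\{P_{ij}(v_{-i}),c_j\}$ both preserves the $\mathrm{CORE}$ concentration inequality and correctly subtracts the production cost of every sold item, so that the quantity estimated by the median entry fee is genuinely \emph{profit} and not just revenue. Getting the constants in this cost-adjusted version to still come out to $8$ is the main accounting work.
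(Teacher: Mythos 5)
Your overall strategy is the paper's: adapt the duality decomposition of \cite{cai2016duality} with cost-adjusted virtual values and reserves $\beta_{ij}=\max\{P_{ij},c_j\}$, cover the per-item terms with $\cM_{IT}$, cover {\sc Core} with the median entry fee of $\cM_{BVCG}$, and mix with weights $3/4$ and $1/4$. The DSIC argument in Step~1 and the treatment of {\sc Single}, {\sc Tail} (via the one-lookahead profit $r$) and {\sc Core} (via Lemma~12 of \cite{cai2016duality} plus the observation that $\beta_{ij}\geq c_j$ makes the entry fee pure profit) all match the paper.

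However, there is a genuine gap in Step~2: the decomposition you write, $OPT(\cI^c)\leq \mbox{\sc Single}+\mbox{\sc Tail}+\mbox{\sc Core}$, is not the CDW decomposition and is not a valid upper bound. The correct benchmark (Lemma~\ref{lem:duality}) has five terms, $\mbox{\sc Single}+\mbox{\sc Under}+\mbox{\sc Over}+\mbox{\sc Tail}+\mbox{\sc Core}$: the {\sc Under} and {\sc Over} terms account for items allocated to a buyer whose value falls below, respectively above, the reserve $\beta_{ij}(t_{-i})$ but which are not that buyer's ``favorite,'' and they cannot be dropped. Each is bounded by the optimal profit of the single-parameter COPIES instance (hence by $PFT(\cM_{IT};\cI^c)$), and the {\sc Under} bound is precisely where the production costs force a new argument: the paper builds a second-price auction with an extra randomized reserve $v_{ij}$ on the COPIES instance and does a case analysis on whether $c_j\geq P_{ij}(t_{-i})$, to ensure the profit (not just revenue) from each sale dominates $v_{ij}-c_j$. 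Your Step~4 arithmetic ($3\,\mbox{\sc Single}+\mbox{\sc Tail}+\mbox{\sc Core}$) silently reintroduces the missing mass but does not follow from anything you proved; the correct count is $6\,PFT(\cM_{IT})+2\,PFT(\cM_{BVCG})$, coming from one copy each of {\sc Single}, {\sc Under}, {\sc Over}, {\sc Tail} plus the $2r$ from {\sc Core}. Relatedly, you identify the {\sc Core} bound as the main obstacle, but that part adapts essentially verbatim; the cost-specific work is in {\sc Under}.
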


\subsection{Converting  $\cM_{A}$ to Two-sided Markets}
Next we prove the cost-monotonicity for Mechanism $\cM_{A}$.
First, we start with Mechanism $\cM_{IT}$.

\begin{lemma}
\label{lem:singlemono}
$\cM_{IT}$ is cost-monotone.
\end{lemma}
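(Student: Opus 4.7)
The structure of $\cM_{IT}$ makes the claim essentially immediate, so the plan is to unpack its definition carefully and then check monotonicity pointwise in the buyers' valuation profile. Fix two production-cost instances $\cI^c = (N,M,D,c)$ and $\cI'^c = (N,M,D,c')$ that differ only at item $j$, with $c_j \leq c_j'$, and fix an arbitrary buyers' valuation profile $v$. Since $\cM_{IT}$ runs the single-item optimal mechanism of \cite{myerson1983efficient,deng2014revenue} separately on each item, the allocation and payment rules for every item $k \neq j$ depend only on $v_{\cdot k}$ and $c_k$, and hence are identical under $\cI^c$ and $\cI'^c$. So we only need to compare the probability that item $j$ is sold in the two instances, where this probability is determined entirely by the single-item sub-mechanism on item $j$ with reserve equal to its cost.

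Next, I would recall that this single-item sub-mechanism is equivalent to a second-price auction on buyers' (ironed) virtual values $\phi_i(v_{ij})$ with reserve price equal to the production cost. In particular, item $j$ is sold (possibly with ties broken randomly among buyers attaining the maximum ironed virtual value) if and only if $\max_i \phi_i(v_{ij}) \geq c_j$ in instance $\cI^c$, and if and only if $\max_i \phi_i(v_{ij}) \geq c_j'$ in instance $\cI'^c$. Since $c_j \leq c_j'$, the latter event is contained in the former, so the indicator of sale under $\cI'^c$ is pointwise $\leq$ the indicator under $\cI^c$. Taking expectations over any internal randomization (e.g., tie-breaking when ironing induces virtual-value plateaus) preserves this inequality, giving $x_j \geq x_j'$ for the sale probabilities of item $j$ under the two instances, which is exactly cost-monotonicity.

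There is essentially no obstacle: the only subtlety worth flagging explicitly in the write-up is the case of irregular distributions, where ironing can produce ties in virtual values and the winner is chosen by randomization; but even there the \emph{event} of a sale depends only on whether the maximum ironed virtual value clears the reserve, so raising the reserve can only shrink the set of valuation profiles that lead to a sale.
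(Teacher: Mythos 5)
Your proof is correct and follows essentially the same route as the paper's: item independence of $\cM_{IT}$ reduces the claim to a single item, and the sale decision there is a threshold comparison between the winner's (ironed) virtual value and the cost, so raising the cost can only shrink the sale event. You are slightly more careful than the paper about ties under ironing and about using $\max_i \phi_i(v_{ij})$ rather than the highest-bid winner's virtual value, but this does not change the structure of the argument.
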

\begin{proof}
For any two production-cost instances $\cI^{c}=( N, M,D,c)$ 
and $\cI'^{c}=( N, M,D,c')$,
where there exists an item $j\in  M$ such that $c'_{j}>c_{j}$ and $c'_{j'} = c_{j'}$ for any $j' \neq j$,
we show that in Mechanism $\cM_{IT}$, when buyers' valuation profile is $v\sim D$,
if item $j$ is not sold in $\cI^{c}$,
then item $j$ is not sold in $\cI'^{c}$.
Since all buyers' valuation functions are additive and $\cM_{IT}$ sells each item individually,
the result of selling one item does not effect any other item.
In the mechanism of \cite{myerson1983efficient},
given the reported valuation profile $v$,
the potential winner of item $j$ is the buyer who has highest virtual value on it, denoted by $i_{j} = \arg\max_{i\in  N} v_{ij}$.
If her virtual value $\phi_{i_j j}(v_{i_j j})$ is at least the cost of item $j$, buyer $i_{j}$ takes item $j$.
Otherwise, item $j$ is kept unsold.
Therefore, if item $j$ is not sold in $\cI^{c}$, then $\phi_{i_j j}(v_{i_j j}) - c_j  < 0$
which implies $ \phi_{i_j j}(v_{i_j j}) - c'_j<0$ and item $j$ cannot be sold in $\cI'^{c}$.
Thus $\cM_{IT}$ satisfies cost-monotonicity.
\end{proof}

Next we show $\cM_{BVCG}$ is cost-monotone.
Since we need to apply $\cM_{BVCG}$ to different instances with different cost vectors $c$ and $c'$,
we explicitly write $\beta_{ij}(v_{-i}, c_j)$ and $e_{i}(v_{-i}, c)$
in Steps \ref{step:beta} and~\ref{step:entry-fee} of Mechanism~\ref{alg:bvcg}.

\begin{lemma}
\label{lem:additive}
$\cM_{BVCG}$ is cost-monotone.
\end{lemma}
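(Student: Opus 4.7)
Fix a cost change $c\to c'$ at a single item $j_0$ with $c'_{j_0}\ge c_{j_0}$, and fix any buyers' valuation profile $v$. Since $\cM_{BVCG}$ is deterministic given $v$, it suffices to show: if item $j_0$ is sold under $c'$, then it is also sold under $c$. The allocation rule assigns item $j_0$ only to a buyer $i$ with $v_{ij_0}\ge\beta_{ij_0}(v_{-i})=\max\{\max_{i'\neq i}v_{i'j_0},\,c_{j_0}\}$, so in either instance the only possible recipient of $j_0$ is $i^*\triangleq\argmax_i v_{ij_0}$ (under any fixed tie-breaking rule), and this choice does not depend on the cost vector. Because $c'_{j_0}\ge c_{j_0}$ makes qualification strictly harder, $v_{i^*j_0}\ge\beta_{i^*j_0}(v_{-i^*},c')$ immediately implies $v_{i^*j_0}\ge\beta_{i^*j_0}(v_{-i^*},c)$, so it remains to verify that $i^*$'s acceptance of her entry fee is preserved when we move from $c'$ back to $c$.

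Let $\delta\triangleq\beta_{i^*j_0}(v_{-i^*},c')-\beta_{i^*j_0}(v_{-i^*},c)\ge 0$, and note that $\beta_{i^*j}$ is unchanged for $j\neq j_0$. Writing $A_{i^*}(c)\triangleq\sum_j(v_{i^*j}-\beta_{i^*j}(v_{-i^*},c))^+$, the hypothesis $v_{i^*j_0}\ge\beta_{i^*j_0}(v_{-i^*},c')\ge\beta_{i^*j_0}(v_{-i^*},c)$ implies that no truncation occurs at item $j_0$ in either instance, so the key identity $A_{i^*}(c)-A_{i^*}(c')=\delta$ holds exactly. For the entry fee $E_{i^*}(c)$, view it as the median of $X(\beta)\triangleq Y+(t_{i^*j_0}-\beta)^+$, where $Y\triangleq\sum_{j\neq j_0}(t_{i^*j}-\beta_{i^*j}(v_{-i^*}))^+$ does not depend on $\beta$ and where $t_{i^*}\sim D_{i^*}$. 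A pointwise check under a common draw of $t_{i^*}$ gives $0\le X(\beta)-X(\beta')\le \beta'-\beta$ whenever $\beta<\beta'$; since $X(\beta)-(\beta'-\beta)$ is then pointwise dominated by $X(\beta')$, taking medians yields $E_{i^*}(c)-E_{i^*}(c')\le\delta$. Combining, $A_{i^*}(c)-E_{i^*}(c)\ge A_{i^*}(c')-E_{i^*}(c')\ge 0$, so $i^*$ accepts her entry fee under $c$ and item $j_0$ is sold in $\cI^c$.

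The main obstacle is pushing the pointwise comparison $X(\beta)-X(\beta')\in[0,\beta'-\beta]$ through the median operator, which is what forces the coupling (common randomness $t_{i^*}$) and what makes it essential that only $\beta_{i^*j_0}$ shifts while the contributions from the other items collapse into the $\beta$-independent summand $Y$. The remainder of the argument is bookkeeping: both the ``surplus'' $A_{i^*}$ and (an upper bound on) the entry fee $E_{i^*}$ absorb exactly the same increment $\delta$ coming from the increase in $\beta_{i^*j_0}$, so acceptance is preserved.
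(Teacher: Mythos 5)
Your proof is correct and follows essentially the same route as the paper's: identify the cost-independent potential winner, observe that the surplus $\sum_k(v_{i^*k}-\beta_{i^*k})^+$ drops by exactly $\delta$ when the item still qualifies under $c'$, and show the entry fee drops by at most $\delta$ (the paper derives this via the acceptance-probability-at-least-$\tfrac12$ argument, which is the same median-monotonicity fact you invoke through pointwise domination). No gaps.
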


\begin{proof}

Similarly, for any two production-cost instances $\cI^{c}=( N, M,D,c)$ 
and $\cI'^{c}=( N, M,D,c')$,
where there exists an item $j\in  M$ such that $c'_{j}>c_{j}$ and $c'_{j'} = c_{j'}$ for any $j' \neq j$,
we show that in Mechanism $\cM_{BVCG}$, when buyers' valuation profile is $v\sim D$,
if item $j$ is sold in $\cI'^{c}$,
then item $j$ is also sold in $\cI^{c}$.

In Mechanism $\cM_{BVCG}$, given
the valuation profile $v$, the potential winner of item $j$ is the buyer who has highest value on item~$j$,
denoted by $i_{j} = \arg\max_{i\in  N} v_{ij}$.
When the cost vector is $c$, item~$j$ is sold to $i_j$
if and only if $i_{j}$ accepts the entry fee $e_{i_{j}}(v_{-i},c)$ and $v_{i_j} - \beta_{i_j j}(v_{-i_j}, c_j)>0$.
Otherwise item $j$ is unsold.
Note that given different production cost $c'$, the potential winner of item $j$ remains unchanged.

Note that the entry fee $e_{i}(v_{-i},c)$ is selected such that the probability that buyer $i$ accepts it is exactly $\frac{1}{2}$.
Let $d_j \triangleq \beta_{i_j j}(v_{-i_j}, c'_j) - \beta_{i_j j}(v_{-i_j}, c_j)$
be the increase of the item reserve in $\cM_{BVCG}$ for buyer $i_j$.
Then
\begin{equation}
\label{eq:bvcg:monotone}
(v_{i_{j}j}-\beta_{i_{j}j}(v_{-i_{j}}, c'_j))^{+} + d_j \geq (v_{i_{j}j}-\beta_{i_{j}j}(v_{-i_{j}}, c_j))^{+}.
\end{equation}
Indeed the equality holds in Inequality \ref{eq:bvcg:monotone} if $v_{i_{j}j}-\beta_{i_{j}j}(v_{-i_{j}}, c'_j)\geq 0$.

Let $e'_{i_{j}} \triangleq e_{i_{j}}(v_{-i_{j}},c) - d_j$. 
When the cost vector is $c$, buyer $i_{j}$'s utility is
$$u_{i_{j}} = \sum_{k\in M} (v_{i_{j}k}-\beta_{i_{j}k}(v_{-i_{j}}, c_k))^{+}-e_{i_{j}}(v_{-i_{j}},c).$$
When the cost is $c'$, if the entry fee is $e'_{i_{j}}$, buyer $i_{j}$'s utility is
\begin{eqnarray*}
&&u'_{i_{j}} (e'_{i_{j}}) = \sum_{k\in M} (v_{i_{j}k}-\beta_{i_{j}k}(v_{-i_{j}}, c'_k))^{+}-e'_{i_{j}}\\
&=&\sum_{k \neq j} (v_{i_{j}k}-\beta_{i_{j}k}(v_{-i_{j}}, c_k))^{+} + (v_{i_{j}j}-\beta_{i_{j}j}(v_{-i_{j}}, c'_j))^{+} -e'_{i_{j}}\\
&=& \sum_{k \neq j} (v_{i_{j}k}-\beta_{i_{j}k}(v_{-i_{j}}, c_k))^{+} + (v_{i_{j}j}-\beta_{i_{j}j}(v_{-i_{j}}, c'_j))^{+}\
-e_{i_{j}}(v_{-i_{j}},c) + d_j \\
&\geq& \sum_{k \neq j} (v_{i_{j}k}-\beta_{i_{j}k}(v_{-i_{j}}, c_k))^{+} + (v_{i_{j}j}-\beta_{i_{j}j}(v_{-i_{j}}, c_j))^{+} 
- e_{i_{j}}(v_{-i_{j}},c)
= u_{i_{j}}.
\end{eqnarray*}
The inequality above is by Inequality \ref{eq:bvcg:monotone}.
Moreover, if $v_{i_{j}j}-\beta_{i_{j}j}(v_{-i_{j}}, c'_j) \geq 0$, then
\begin{equation}
\label{eq:bvcg:monotone:1}
u'_{i_{j}} (e'_{i_{j}})= u_{i_{j}}.
\end{equation}

That is for any valuation profile $v_{i_{j}}$,
buyer $i_{j}$'s utility under the entry fee $e'_{i_{j}}$ and the cost vector $c'$
is at least her utility under the entry fee $e_{i_{j}}(v_{-i_{j}},c)$ and the cost vector $c$.
Therefore,
$$
\Pr_{t_{i_{j}}\sim D_{i_{j}}}[\sum_{j\in  M}(t_{i_{j}j} - \beta_{i_{j}j}(v_{-i_{j}}))^{+}\geq e'_{i_{j}}] \geq \frac{1}{2}.
$$
Since the real entry fee $e_{i_j}(v_{-i_{j}},c')$ is selected to be the median of
the random variable $\sum_{j\in  M}(t_{i_{j}j} - \beta_{i_{j}j}(v_{-i_{j}}))^{+}$, we have
$e_{i_j}(v_{-i_{j}},c') \geq e'_{i_{j}}$.

Now if under cost vector $c'$, item $j$ is sold, then
$$v_{i_{j}j}-\beta_{i_{j}j}(v_{-i_{j}}, c'_j) \geq 0,$$
and
$$\sum_{k\in M} (v_{i_{j}k}-\beta_{i_{j}k}(v_{-i_{j}}, c'_k))^{+}-e_{i_j}(v_{-i_{j}},c') \geq 0.$$
Thus under cost vector $c$, we have
$$
v_{i_{j}j}-\beta_{i_{j}j}(v_{-i_{j}}, c_j)\geq 0,
$$
and by Equation \ref{eq:bvcg:monotone:1},
\begin{eqnarray*}
u_{i_{j}}&=&u'_{i_{j}} (e'_{i_{j}}) =\sum_{k\in M} (v_{i_{j}k}-\beta_{i_{j}k}(v_{-i_{j}}, c'_k))^{+}-e'_{i_{j}}\\
&\geq&\sum_{k\in M} (v_{i_{j}k}-\beta_{i_{j}k}(v_{-i_{j}}, c'_k))^{+}-e_{i_j}(v_{-i_{j}},c') \geq 0.
\end{eqnarray*}

Therefore, under cost vector $c$, item $j$ is also sold.
That is, $\cM_{BVCG}$ is cost-monotone.
\end{proof}

By randomly selecting from $\cM_{IT}$ and $\cM_{BVCG}$,
Mechanism $\cM_A$ is still cost-monotone.
Therefore, by Theorem~\ref{thm:reduction:2},
$\cM_{A}$ can be converted into a mechanism for two-sided markets, which is again an 8-approximation to the optimal profit.
Formally, we have the following theorem. 

\begin{theorem}
\label{thm:add}
When the buyers have additive valuations,
there exists a DSIC mechanism that is an 8-approximation to the optimal profit for two-sided markets.
\end{theorem}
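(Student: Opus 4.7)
The plan is to combine the three main ingredients already established in this section. By Theorem \ref{thm:cost}, the mechanism $\cM_A$ (which runs $\cM_{IT}$ with probability $\tfrac{3}{4}$ and $\cM_{BVCG}$ with probability $\tfrac{1}{4}$) is a DSIC 8-approximation to the optimal profit for production-cost markets. By Lemmas \ref{lem:singlemono} and \ref{lem:additive}, each of its two components is cost-monotone. The remaining step is to observe that cost-monotonicity is preserved under convex combinations, so $\cM_A$ itself is cost-monotone, and then invoke the black-box reduction in Theorem \ref{thm:reduction:2}.

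First, I would spell out the argument that $\cM_A$ inherits cost-monotonicity. Fix two production-cost instances $\cI^c$ and $\cI'^c$ that differ only in the cost of a single item $j$ with $c_j \le c_j'$, and fix a buyers' valuation profile $v \sim D$. Writing $x_j^{IT}, x_j^{BVCG}$ (resp.\ $x_j'^{,IT}, x_j'^{,BVCG}$) for the probabilities that item $j$ is sold under each component mechanism in $\cI^c$ (resp.\ $\cI'^c$), Lemmas \ref{lem:singlemono} and \ref{lem:additive} give $x_j^{IT} \ge x_j'^{,IT}$ and $x_j^{BVCG} \ge x_j'^{,BVCG}$. Since $\cM_A$ selects between the two mechanisms by a cost-independent coin flip, the probability that item $j$ is sold under $\cM_A$ is $\tfrac{3}{4} x_j^{IT} + \tfrac{1}{4} x_j^{BVCG}$, which is componentwise at least the corresponding probability under $\cI'^c$. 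Hence $\cM_A$ is cost-monotone.

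With cost-monotonicity in hand, I would apply Theorem \ref{thm:reduction:2} directly to $\cM_A$: this yields a DSIC mechanism $\cM$ for two-sided markets satisfying $PFT(\cM;\cI) \ge \tfrac{1}{8} OPT(\cI)$ for every instance $\cI = (N, M, D^B, D^S)$ with additive buyer valuations. This produces the claimed 8-approximation.

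I do not anticipate a serious obstacle here, since all the substantive work has been done in the previous lemmas and in Theorem \ref{thm:reduction:2}. The only mildly delicate point is making sure that the randomization in $\cM_A$ is performed independently of the cost vector (so that the mixing weights $\tfrac{3}{4}$ and $\tfrac{1}{4}$ can be factored out of the comparison between $\cI^c$ and $\cI'^c$); this is immediate from the definition of $\cM_A$ but should be stated explicitly so the reader sees why convex combinations of cost-monotone mechanisms remain cost-monotone.
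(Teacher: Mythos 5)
Your proposal is correct and follows essentially the same route as the paper: the paper likewise notes that $\cM_A$, as a cost-independent random mixture of the cost-monotone $\cM_{IT}$ and $\cM_{BVCG}$, is itself cost-monotone, and then invokes Theorem \ref{thm:reduction:2} together with Theorem \ref{thm:cost}. Your explicit remark that the mixing coin must be independent of the cost vector is a small but welcome clarification of a step the paper leaves implicit.
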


\section{Conclusion and Open Problems.}
In this paper we provide the first DSIC mechanism that is a constant-approximation to the broker's optimal profit
in multi-parameter settings with more than one buyer.
We use production-cost markets as a bridge between auctions and two-sided markets, and provide a general reduction from production-cost markets to two-sided markets.
How to design DSIC mechanisms for the broker's profit in multi-buyer multi-parameter settings
under other valuation functions of the buyers (e.g.
unit-demand or sub-additive) is still open,
and it would be interesting to understand the role of production-cost markets in those scenarios.

\bibliographystyle{apalike}
\bibliography{2-side}

\begin{thebibliography}{}

\bibitem[Balseiro et~al., 2019]{balseiro2019dynamic}
Balseiro, S.~R., Mirrokni, V., Leme, R.~P., and Zuo, S. (2019).
\newblock Dynamic double auctions: Towards first best.
\newblock In {\em Proceedings of the Thirtieth Annual ACM-SIAM Symposium on
  Discrete Algorithms}, pages 157--172. SIAM.

\bibitem[Brustle et~al., 2017]{cai2017approximating}
Brustle, J., Cai, Y., Wu, F., and Zhao, M. (2017).
\newblock Approximating gains from trade in two-sided markets via simple
  mechanisms.
\newblock In {\em 18th ACM Conference on Electronic Commerce (EC'17)}, pages
  589--590.

\bibitem[Cai et~al., 2012a]{cai2012algorithmic}
Cai, Y., Daskalakis, C., and Weinberg, S.~M. (2012a).
\newblock An algorithmic characterization of multi-dimensional mechanisms.
\newblock In {\em 44th Annual ACM Symposium on Theory of Computing (STOC'12)},
  pages 459--478.

\bibitem[Cai et~al., 2012b]{cai2012optimal}
Cai, Y., Daskalakis, C., and Weinberg, S.~M. (2012b).
\newblock Optimal multi-dimensional mechanism design: Reducing revenue to
  welfare maximization.
\newblock In {\em 53rd Annual Symposium on Foundations of Computer Science
  (FOCS'12)}, pages 130--139.

\bibitem[Cai et~al., 2016]{cai2016duality}
Cai, Y., Devanur, N.~R., and Weinberg, S.~M. (2016).
\newblock A duality based unified approach to {Bayesian} mechanism design.
\newblock In {\em 48th Annual ACM Symposium on Theory of Computing (STOC'16)},
  pages 926--939.

\bibitem[Chan and Chen, 2014]{chan2014truthful}
Chan, H. and Chen, J. (2014).
\newblock Truthful multi-unit procurements with budgets.
\newblock In {\em International Conference on Web and Internet Economics
  (WINE'14)}, pages 89--105.

\bibitem[Chan and Chen, 2016]{chan2016budget}
Chan, H. and Chen, J. (2016).
\newblock Budget feasible mechanisms for dealers.
\newblock In {\em Proceedings of the 2016 International Conference on
  Autonomous Agents \& Multiagent Systems (AAMAS'16)}, pages 113--122.

\bibitem[Chen et~al., 2011]{chen2011approximability}
Chen, N., Gravin, N., and Lu, P. (2011).
\newblock On the approximability of budget feasible mechanisms.
\newblock In {\em 22nd annual ACM-SIAM Symposium on Discrete Algorithms
  (SODA'11)}, pages 685--699.

\bibitem[Colini-Baldeschi et~al., 2016]{colini2016approximately}
Colini-Baldeschi, R., de~Keijzer, B., Leonardi, S., and Turchetta, S. (2016).
\newblock Approximately efficient double auctions with strong budget balance.
\newblock In {\em 27th Annual ACM-SIAM Symposium on Discrete Algorithms
  (SODA'16)}, pages 1424--1443. Society for Industrial and Applied Mathematics.

\bibitem[Colini-Baldeschi et~al., 2017]{colini2017approximately}
Colini-Baldeschi, R., Goldberg, P.~W., de~Keijzer, B., Leonardi, S.,
  Roughgarden, T., and Turchetta, S. (2017).
\newblock Approximately efficient two-sided combinatorial auctions.
\newblock In {\em 18th ACM Conference on Electronic Commerce (EC'17)}, pages
  591--608.

\bibitem[Deng et~al., 2014]{deng2014revenue}
Deng, X., Goldberg, P., Tang, B., and Zhang, J. (2014).
\newblock Revenue maximization in a bayesian double auction market.
\newblock {\em Theoretical Computer Science}, 539:1--12.

\bibitem[Dobzinski et~al., 2011]{dobzinski2011mechanisms}
Dobzinski, S., Papadimitriou, C.~H., and Singer, Y. (2011).
\newblock Mechanisms for complement-free procurement.
\newblock In {\em 12th ACM Conference on Electronic Commerce (EC'11)}, pages
  273--282.

\bibitem[Gerstgrasser et~al., 2016]{gerstgrasser2016revenue}
Gerstgrasser, M., Goldberg, P.~W., and Koutsoupias, E. (2016).
\newblock Revenue maximization for market intermediation with correlated
  priors.
\newblock In {\em International Symposium on Algorithmic Game Theory}, pages
  273--285. Springer.

\bibitem[Hart and Nisan, 2017]{hart2017approximate}
Hart, S. and Nisan, N. (2017).
\newblock Approximate revenue maximization with multiple items.
\newblock {\em Journal of Economic Theory}, 172:313--347.

\bibitem[Jain and Wilkens, 2012]{jain2012ebay}
Jain, K. and Wilkens, C.~A. (2012).
\newblock ebay's market intermediation problem.
\newblock {\em arXiv preprint arXiv:1209.5348}.

\bibitem[Kleinberg and Weinberg, 2012]{kleinberg2012matroid}
Kleinberg, R. and Weinberg, S.~M. (2012).
\newblock Matroid prophet inequalities.
\newblock In {\em 44th Annual ACM Symposium on Theory of Computing (STOC'12)},
  pages 123--136.

\bibitem[McAfee, 1992]{mcafee1992dominant}
McAfee, R.~P. (1992).
\newblock A dominant strategy double auction.
\newblock {\em Journal of Economic Theory}, 56(2):434--450.

\bibitem[McAfee, 2008]{mcafee2008gains}
McAfee, R.~P. (2008).
\newblock The gains from trade under fixed price mechanisms.
\newblock {\em Applied Economics Research Bulletin}, 1(1):1--10.

\bibitem[Myerson, 1981]{myerson1981optimal}
Myerson, R.~B. (1981).
\newblock Optimal auction design.
\newblock {\em Mathematics of operations research}, 6(1):58--73.

\bibitem[Myerson and Satterthwaite, 1983]{myerson1983efficient}
Myerson, R.~B. and Satterthwaite, M.~A. (1983).
\newblock Efficient mechanisms for bilateral trading.
\newblock {\em Journal of Economic Theory}, 29(2):265--281.

\bibitem[Ronen, 2001]{ronen2001approximating}
Ronen, A. (2001).
\newblock On approximating optimal auctions.
\newblock In {\em 3rd ACM Conference on Electronic Commerce (EC'01)}, pages
  11--17.

\bibitem[Segal-Halevi et~al., 2018a]{segal2018double}
Segal-Halevi, E., Hassidim, A., and Aumann, Y. (2018a).
\newblock Double auctions in markets for multiple kinds of goods.
\newblock In {\em IJCAI}, pages 489--497.

\bibitem[Segal-Halevi et~al., 2018b]{segal2018muda}
Segal-Halevi, E., Hassidim, A., and Aumann, Y. (2018b).
\newblock Muda: a truthful multi-unit double-auction mechanism.
\newblock In {\em Thirty-Second AAAI Conference on Artificial Intelligence},
  pages 1193--1201.

\bibitem[Singer, 2010]{singer2010budget}
Singer, Y. (2010).
\newblock Budget feasible mechanisms.
\newblock In {\em 51st Symposium on Foundations of Computer Science (FOCS'10)},
  pages 765--774.

\bibitem[Yao, 2015]{yao2015n}
Yao, A. C.-C. (2015).
\newblock An n-to-1 bidder reduction for multi-item auctions and its
  applications.
\newblock In {\em 26th Annual ACM-SIAM Symposium on Discrete Algorithms
  (SODA'15)}, pages 92--109.

\end{thebibliography}

\appendix
\section{Proof of Theorem \ref{thm:cost}}

The proof of Theorem \ref{thm:cost} is almost the same with \cite{cai2016duality} 
with modifications to deal with the production cost. 
We provide the full proof here for completeness. 
Similar to \cite{cai2016duality}, we only
need to consider the prior distribution with finite support.

Arbitrarily fix a BIC-BIR mechanism $\cM = (x, p)$.
We consider $\cM$ in its ex ante form.
Specifically,
for any buyer $i$ with valuation $v_i$ and for any item $j$, let
(a) $x_{ij}(v_{i}) \triangleq \bE_{v_{-i}\sim D_{-i}}x_{ij}(v_{i},v_{-i})$
be the probability that buyer $i$ gets item $j$,
over the randomness of the mechanism and $v_{-i}$;
and (b) $p_{i}(v_{i}) \triangleq \bE_{v_{-i}\sim D_{-i}}p_{i}(v_{i},v_{-i})$ be the expected payment
made by
 $i$.
Then the broker's profit is
$$
PFT(\cM;\cI^{c}) = \sum_{i \in  N} \sum_{v_{i} \in T_{i}} D_{i}(v_{i})  \left( p_{i}(v_{i}) - \sum_{j}x_{ij}(v_{i})c_{j}\right),
$$
where $D_i(v_i)$ is the probability of $v_i$ according to distribution $D_i$.

Consider the constraints for $\cM$.
Let $T^{+}_{i}=T_{i}\cup\{\bot\}$, where ``$\bot$'' is a special symbol not in the support $T_{i}$.
Moreover, $x_{i}(\bot)\triangleq (0,\dots, 0)$ and $p_{i}(\bot)\triangleq 0$.
Because $\cM$ is BIC and BIR, we have
\begin{eqnarray}
x_{i}(v_{i}) \cdot v_{i} - p_{i}(v_{i}) \geq x_{i}(v'_{i}) \cdot v_{i} - p_{i}(v'_{i}), 
\hspace{20pt}
\forall i\in  N, v_{i} \in T_{i}, v'_{i} \in T^{+}_{i} \label{eq:add:ic}
\end{eqnarray}
In particular, when $v'_{i} = \bot$, the above constraint restricts the mechanism to be BIR.
Therefore, the problem of designing BIC-BIR mechanisms is to maximize $PFT(\cM;\cI^{c})$
subject to Inequality \ref{eq:add:ic}.

To upper-bound $PFT(\cM;\cI^{c})$, we first introduce some notations.
For any buyer $i$, item $j$ and valuation sub-profile
$v_{-i} \in T_{-i}$, let $P_{ij}(v_{-i}) = \max_{i'\neq i} v_{i'j}$
and $\beta_{ij}(v_{-i}) = \max\{P_{ij}(v_{-i}), c_{j}\}$,
as in Mechanism $\cM_{BVCG}$.
In Mechanism \ref{alg:1la} we generalize the 1-lookahead mechanism \cite{ronen2001approximating} to production-cost markets, denoted by $\cM_{1LA}$.
For any buyer $i$ with valuation $v_{i}$ and for any item $j$,
let $r_{ij}(v_{-i}) = \max_{p\geq \beta_{ij}(v_{-i})} (p - c_j) \cdot \Pr_{y\sim D_{ij}}[y\geq p]$
and $r_{i}(v_{-i}) = \sum_{j \in  M} r_{ij}(v_{-i})$.
Moreover, let $r_{i} = \bE_{v_{-i}\sim D_{-i}} r_{i}(v_{-i})$ and $r = \sum_{i} r_{i}$.
Then the profit of $\cM_{1LA}$ is exactly $r$
and will be at most $PFT(\cM_{IT}; \cI^{c})$.

\begin{algorithm}[htbp]
\floatname{algorithm}{Mechanism}
  \caption{\hspace{-3pt} $\cM_{1LA}$ for Production-Cost Markets}
 \label{alg:1la}
  \begin{algorithmic}[1]
\STATE Collect the valuation profile $v$ from the buyers.

\FOR{each item $j$}
\STATE If no buyer has value for $j$ higher than $\beta_{ij}(v_{-i})$, keep $j$ unsold.

\STATE Otherwise, let $i$ be the highest bidder for $j$, and
 $$\rho_{ij}(v_{-i}) \triangleq \argmax\limits_{p \geq \beta_{ij}(v_{-i})} (p - c_j) \cdot \Pr_{y \sim D_{ij}} (y \geq p).$$
Sell the item $j$ to buyer $i$ with price $\rho_{ij}(v_{-i})$ if and only if $v_{ij} \geq \rho_{ij}(v_{-i})$.
\ENDFOR
\end{algorithmic}
\end{algorithm}

Next, let $R^{(v_{-i})}_{0} = \{v_{i}\in T_{i} | v_{ij} \leq \beta_{ij}(v_{-i}) \mbox{ for all $j$} \}$
and
$R^{(v_{-i})}_{j} = \{v_{i}\in T_{i} | \mbox{$j$ is the smallest index such that }
j\in\arg\max_{k\in M}\{v_{ik} - \beta_{ik}(v_{-i})\} \mbox{ and } v_{ij} - \beta_{ij}(v_{-i})>0\}$.
Moreover, let $\bI[E]$ be the indicator of an event $E$.
Now we are ready to upper-bound $PFT(\cM;\cI^{c})$.
By adopting  the duality framework in \cite{cai2016duality} to production-cost markets,
we have the following.

\begin{lemma}
\label{lem:duality}
For any BIC-BIR mechanism $\cM=(x,p)$ and production-cost instance $\cI^{c}=(N,M,D,c)$,
$$
PFT(\cM;\cI^{c}) \leq
\mbox{\sc Single + Under + Over + Tail + Core},
$$
where $\phi_{ij}(v_{ij})$ is Myerson's (ironed) virtual value  and

\begin{eqnarray*}
\mbox{\sc Single}& =&
 \sum_{i \in  N} \sum_{v_{i} \in T_{i}}
\sum_{j \in  M}
D_{i}(v_{i})
x_{ij}(v_{i}) (\phi_{ij}(v_{ij}) - c_{j})
\cdot\Pr_{t_{-i}\sim D_{-i}}
\left[v_{i} \in R^{(t_{-i})}_{j}\right],\\
\mbox{\sc Under} &=&
\sum_{i \in  N} \sum_{v_{i} \in T_{i}}
\sum_{j \in  M}
D_{i}(v_{i})
x_{ij}(v_{i})  
\cdot \left(\sum_{t_{-i}\in T_{-i}} D_{-i}(t_{-i})  (v_{ij} - c_{j}) \cdot \bI[v_{ij} < \beta_{ij}(t_{-i})] \right),
\end{eqnarray*}

\begin{eqnarray*}
 \mbox{\sc Over} & = &
\sum_{i \in  N} \sum_{v_{i} \in T_{i}}
\sum_{j \in  M}
D_{i}(v_{i})
x_{ij}(v_{i}) 
\cdot \big(\sum_{t_{-i}\in T_{-i}} D_{-i}(t_{-i})
(\beta_{ij}(t_{-i}) - c_{j}))
\cdot\bI[(v_{ij} \geq \beta_{ij}(t_{-i})]\big),\\
\mbox{\sc Tail}& = &
\sum_{i \in  N} \sum_{j \in  M}
\sum_{t_{-i}\in T_{-i}} D_{-i}(t_{-i})  
\sum_{v_{ij} > \beta_{ij}(t_{-i}) + r_i(t_{-i})}
D_{ij}(v_{ij})
(v_{ij} - \beta_{ij}(t_{-i})) \\
&&
\quad\cdot \Pr_{v_{i,-j} \sim D_{i,-j}}
[\exists k \neq j, v_{ik} - \beta_{ik}(t_{-i})
\geq v_{ij} - \beta_{ij}(t_{-i})], \\
\mbox{\sc Core}&=&
\sum_{i \in  N} \sum_{j \in  M}
\sum_{t_{-i}\in T_{-i}} D_{-i}(t_{-i})
\cdot\sum_{\beta_{ij}(t_{-i}) \leq v_{ij}
\leq \beta_{ij}(t_{-i}) + r_i(t_{-i})}
D_{ij}(v_{ij}) 
\cdot(v_{ij} - \beta_{ij}(t_{-i})).
\end{eqnarray*}
\end{lemma}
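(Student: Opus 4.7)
The plan is to adapt the Cai--Devanur--Weinberg duality framework to the production-cost setting by folding the cost term $-\sum_j x_{ij}(v_i)\,c_j$ into the objective of the LP whose variables are the allocation and payment rules. First, I write the broker's profit maximization as a linear program over $(x,p)$ subject to the BIC-BIR constraints in Inequality \ref{eq:add:ic}, introduce dual multipliers $\lambda_i(v_i,v'_i) \geq 0$ for each constraint, and form the Lagrangian. As in CDW, a feasible flow on the "value graph" (from $\bot$ out to every $v_i \in T_i$, decomposable by coordinate) annihilates the payment variables and leaves an upper bound of the form $\sum_{i,v_i,j} D_i(v_i)\,x_{ij}(v_i)\,(\tilde{\phi}_{ij}(v_i) - c_j)$, where $\tilde{\phi}_{ij}$ is the virtual value induced by $\lambda$. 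The only deviation from the auction setting is the additional $-c_j$ that survives because the objective differs by exactly $-\sum_j x_{ij}(v_i)\,c_j$ from the revenue LP.

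Next, I pick the canonical CDW flow tailored to the regions $\{R^{(t_{-i})}_j\}_{j=0}^{m}$: for a type $v_i$ lying in $R^{(t_{-i})}_j$ with $j\geq 1$, the flow decreases value only along coordinate $j$, producing Myerson's ironed virtual value $\phi_{ij}(v_{ij})$ in that coordinate and leaving the other coordinates at their reported values; on $R^{(t_{-i})}_0$ no decrease happens. Combined with $-c_j$, the contribution from $v_i \in R^{(t_{-i})}_j$ to coordinate $j$ is exactly $x_{ij}(v_i)(\phi_{ij}(v_{ij}) - c_j)$, yielding \textsc{Single}. For the other coordinates (or when $v_i \in R^{(t_{-i})}_0$), the virtual value at $j$ is $v_{ij}$, and I split the contribution according to whether $v_{ij} < \beta_{ij}(t_{-i})$ or $v_{ij} \geq \beta_{ij}(t_{-i})$: the former gives \textsc{Under} directly with the $(v_{ij}-c_j)$ factor, while the latter I further decompose by writing $v_{ij} - c_j = (\beta_{ij}(t_{-i}) - c_j) + (v_{ij} - \beta_{ij}(t_{-i}))$. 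The first piece becomes \textsc{Over}. The second piece survives only when $v_i$ lies in some region $R^{(t_{-i})}_k$ with $k\neq j$ (so by definition $v_{ik} - \beta_{ik}(t_{-i}) \geq v_{ij} - \beta_{ij}(t_{-i})$), and splitting the $v_{ij}$-integral at $\beta_{ij}(t_{-i}) + r_i(t_{-i})$ gives \textsc{Tail} for the upper piece and \textsc{Core} for the lower piece, after bounding $x_{ij}(v_i) \leq 1$ and rewriting the marginal as $\Pr_{v_{i,-j}}[\exists k\neq j,\, v_{ik}-\beta_{ik}(t_{-i}) \geq v_{ij}-\beta_{ij}(t_{-i})]$.

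The main obstacle is careful bookkeeping of where the cost term $c_j$ lands after regionwise decomposition. In the pure-auction proof of \cite{cai2016duality}, only $v_{ij}$ appears, but here every virtual-value term is shifted by $-c_j$; one must verify that \textsc{Single}, \textsc{Under}, and \textsc{Over} correctly absorb these shifts (producing $\phi_{ij}-c_j$, $v_{ij}-c_j$, and $\beta_{ij}-c_j$ respectively) while \textsc{Tail} and \textsc{Core} inherit no $c_j$ because they arise from the surplus $v_{ij}-\beta_{ij}(t_{-i})$, which is cost-independent. A secondary technical point is that the reserve $\beta_{ij}(t_{-i}) = \max\{P_{ij}(t_{-i}), c_j\}$ now depends on $c_j$, so the Myerson step used to obtain $\phi_{ij}$ on $R^{(t_{-i})}_j$ must be executed with this cost-aware reserve; this is handled exactly as in the auction proof by noting that $\phi_{ij}$ is determined solely by $D_{ij}$ and does not interact with $\beta_{ij}$ in the duality step. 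Once these cost terms are correctly tracked, the decomposition is identical to \cite{cai2016duality} and the claimed inequality follows.
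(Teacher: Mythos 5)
Your proposal correctly adapts the Cai--Devanur--Weinberg duality framework, which is exactly the route the paper takes (it cites \cite{cai2016duality} and only sketches the adaptation). Your cost bookkeeping — the $-c_j$ surviving in the Lagrangian objective so that the canonical-flow virtual value becomes $\phi_{ij}(v_{ij})-c_j$ on $R_j^{(t_{-i})}$ and $v_{ij}-c_j$ off it, then splitting the latter into $(\beta_{ij}-c_j)$ plus the cost-free surplus $(v_{ij}-\beta_{ij})$ to produce \textsc{Under}, \textsc{Over}, \textsc{Tail}, \textsc{Core} — is precisely the content of the adaptation.
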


In the following, we use our mechanisms to bound the above terms separately.
For any production-cost instance $\cI^{c}=(N,M,D,c)$,
we define its corresponding single-parameter COPIES instance $\hat{\cI}^{c}=(\hat{N},M,D,c)$.
More precisely, for each buyer $i\in N$, there are $m$ copies for $i$ in set $\hat{N}$,\
and the $j$-th copy of $i$ (also denoted by $(i,j)$) is only interested in item $j$.
Since in $\cI^{c}$, buyers' valuations are additive, $OPT(\hat{\cI}^{c}) = PFT(\cM_{IT}; \cI^{c})$.

In the following, we first use $OPT(\hat{\cI}^{c})$ to bound {\sc Single}, {\sc Under} and {\sc Over},
which implies that these terms are upper-bounded by $PFT(\cM_{IT}; \cI^{c})$.

\begin{lemma}\label{lem:single}
$\mbox{\sc Single} \leq OPT(\hat{\cI}^{c})$.
\end{lemma}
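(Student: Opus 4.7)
The plan is to exhibit, for any BIC--BIR mechanism $\cM=(x,p)$ for the production-cost instance $\cI^c$, an ex post feasible allocation rule for the single-parameter COPIES instance $\hat{\cI}^c$ whose expected (ironed) virtual welfare net of production costs equals $\mbox{\sc Single}$. Since Myerson's theory for single-parameter problems bounds this quantity above by the optimal profit $OPT(\hat{\cI}^c)$ via a standard ironing step, the lemma will follow.

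\textbf{Construction.} Given reports $\hat{v}$ from the copies (which naturally assemble into an additive profile $v$), draw independent fresh samples $t_{-i}\sim D_{-i}$ for every $i\in N$, independent of the reports. Allocate item $j$ to copy $(i,j)$ with ex post probability $\tilde{x}^{ep}_{(i,j)}(\hat{v},t)=x_{ij}(v)\cdot \bI[v_i\in R^{(t_{-i})}_j]$. Feasibility is immediate: for each item $j$ and every realization, $\sum_i \tilde{x}^{ep}_{(i,j)}(\hat{v},t)\leq \sum_i x_{ij}(v)\leq 1$ by the feasibility of $\cM$.

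\textbf{Matching $\mbox{\sc Single}$.} The interim probability that copy $(i,j)$ receives item $j$, conditional on her report $v_{ij}$, is
$\tilde{x}_{(i,j)}(v_{ij}) = \bE_{v_{i,-j},v_{-i},t_{-i}}[x_{ij}(v_i,v_{-i})\,\bI[v_i\in R^{(t_{-i})}_j]] = \bE_{v_{i,-j}}\!\big[x_{ij}(v_i)\,\Pr_{t_{-i}}[v_i\in R^{(t_{-i})}_j]\big]$,
using that $v_{-i}$ and $t_{-i}$ are drawn independently from the same distribution $D_{-i}$ and that $x_{ij}(v_i)=\bE_{v_{-i}}[x_{ij}(v_i,v_{-i})]$. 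Multiplying by $\phi_{ij}(v_{ij})-c_j$, taking the expectation over $v_{ij}\sim D_{ij}$, and summing over $(i,j)$ yields precisely the expression for $\mbox{\sc Single}$ in the lemma statement.

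\textbf{Closing the argument.} By Myerson's payment identity in the single-parameter setting $\hat{\cI}^c$, the expected profit of any BIC mechanism equals the expectation of the (ironed) virtual welfare minus production cost of its interim allocation. Our ex post feasible rule $\tilde{x}^{ep}$ realizes $\mbox{\sc Single}$ as such a virtual-welfare expression, so it suffices to argue that some BIC--BIR mechanism in $\hat{\cI}^c$ achieves expected profit at least $\mbox{\sc Single}$, hence $\mbox{\sc Single}\leq OPT(\hat{\cI}^c)$.

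\textbf{Main obstacle.} The delicate step is that the interim rule $\tilde{x}_{(i,j)}(v_{ij})$ need not be monotone in $v_{ij}$ when $\cM$ is merely BIC (as opposed to DSIC), since $x_{ij}(v_i)$ has no direct per-coordinate monotonicity guarantee. I plan to resolve this by the standard Myerson ironing argument applied coordinate-wise in $\hat{\cI}^c$: replace $\tilde{x}_{(i,j)}$ by its monotone ironed version on every ironed interval of $\phi_{ij}$. Because $\phi_{ij}$ is constant on each such interval, this ironing leaves $\bE[\tilde{x}_{(i,j)}(v_{ij})(\phi_{ij}(v_{ij})-c_j)]$ unchanged, while producing a monotone interim allocation that, together with the corresponding Myerson payments, gives a BIC--BIR mechanism in $\hat{\cI}^c$ with expected profit equal to $\mbox{\sc Single}$. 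The expost feasibility of $\tilde{x}^{ep}$ guarantees Border's interim feasibility conditions hold throughout the ironing, so the monotonized rule remains implementable.
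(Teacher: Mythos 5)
Your proof is correct and follows essentially the same route as the paper: both reduce $\mbox{\sc Single}$ to the (ironed) virtual surplus, net of costs, of a feasible allocation in the COPIES instance $\hat{\cI}^{c}$ and then invoke the fact that this is at most $OPT(\hat{\cI}^{c})$. The only difference is that you keep the factor $\Pr_{t_{-i}\sim D_{-i}}[v_{i}\in R^{(t_{-i})}_{j}]$ inside the allocation rule, whereas the paper simply upper-bounds it by $1$; your version is in fact slightly more careful, since dropping that factor is only justified when the summands $\phi_{ij}(v_{ij})-c_{j}$ are nonnegative.
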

\begin{proof}
Note that if the allocation rule of $\cM$ is feasible for $\cI^{c}$,
then it is also feasible for $\hat{\cI}^{c}$. Thus,

\begin{eqnarray*}
 \mbox{\sc Single}
&\leq& \sum_{i \in  N} \sum_{v_{i} \in T_{i}}
\sum_{j \in  M}
D_{i}(v_{i})
x_{ij}(v_{i}) (\phi_{ij}(v_{ij}) - c_{j}) 
\leq OPT(\hat{\cI}^{c}).
\end{eqnarray*}
The first inequality holds because
$\Pr_{t_{-i}\sim D_{-i}}
[v_{i} \in R^{(t_{-i})}_{j}] \leq 1$,
and the last inequality holds because the optimal mechanism for single-parameter buyers
maximizes the virtual social welfare.
\end{proof}

Next we upper-bound $\mbox{\sc Under}$.
Essentially, the proof of Lemma \ref{lem:under} shares the same idea with Lemma 15 in \cite{cai2016duality},
but here we have to deal with multiple reserve prices for each item.

\begin{lemma}\label{lem:under}
$\mbox{\sc Under} \leq OPT(\hat{\cI}^{c})$.
\end{lemma}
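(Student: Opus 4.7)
The plan is to bound each $(i,j)$ contribution to $\mbox{\sc Under}$ by the expected profit of a specific DSIC mechanism on a sub-instance of $\hat\cI^{c}$, and then exploit the feasibility of $\cM$ to aggregate the bound without paying a factor of $n$. Since $v_{ij}\sim D_{ij}$ is independent of $t_{-i}\sim D_{-i}$ and $\beta_{ij}(t_{-i})=\max\{\max_{i'\neq i}t_{i'j},c_j\}$ depends only on $t_{-i}$, I would first marginalize over $v_{i,-j}$ and $t_{-i}$ to rewrite
$$\mbox{\sc Under}\;=\;\sum_{i\in N}\sum_{j\in M}\mathop\bE_{v_{ij}\sim D_{ij}}\bigl[X_{ij}(v_{ij})\,(v_{ij}-c_j)\,q_{ij}(v_{ij})\bigr],$$
where $X_{ij}(v_{ij}):=\mathop\bE_{v_{i,-j}\sim D_{i,-j}}[x_{ij}(v_{ij},v_{i,-j})]$ is buyer $i$'s interim marginal allocation of item $j$, and $q_{ij}(v_{ij}):=\Pr_{t_{-i}\sim D_{-i}}[v_{ij}<\beta_{ij}(t_{-i})]$. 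Replacing $(v_{ij}-c_j)$ by its positive part only enlarges the sum; setting $g_{ij}(v):=(v-c_j)^+q_{ij}(v)\geq 0$ and $R_{ij}:=\max_v g_{ij}(v)$, the inequality $\mathop\bE[X_{ij}\cdot g_{ij}]\leq R_{ij}\cdot \mathop\bE[X_{ij}]$ (valid because both factors are non-negative) yields $\mbox{\sc Under}\leq\sum_{i,j}R_{ij}\tilde X_{ij}$ with $\tilde X_{ij}:=\mathop\bE_{v_{ij}}[X_{ij}(v_{ij})]$.

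Next I would observe two things. First, $R_{ij}=\max_{p\geq c_j}(p-c_j)\Pr[\max_{i'\neq i}v_{i'j}\geq p]$ is exactly the expected profit in $\hat\cI^{c}$ of the DSIC posted-price mechanism that offers item $j$ at the monopoly price to the $n-1$ copies $\{(i',j):i'\neq i\}$ and ignores copy $(i,j)$. Since $\hat\cI^{c}$ decomposes across items into independent item-$j$ sub-instances $\hat\cI^{c}_j$ and this posted-price mechanism is feasible on $\hat\cI^{c}_j$, we get $R_{ij}\leq OPT(\hat\cI^{c}_j)$, where $\sum_j OPT(\hat\cI^{c}_j)=OPT(\hat\cI^{c})$. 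Second, the feasibility constraint on $\cM$ gives $\sum_i \tilde X_{ij}=\mathop\bE_v[\sum_i x_{ij}(v)]\leq 1$ for every $j$; combined with $\sum_i a_i b_i\leq (\max_i a_i)(\sum_i b_i)$ for non-negative $b_i$ this yields $\sum_i R_{ij}\tilde X_{ij}\leq \max_i R_{ij}\leq OPT(\hat\cI^{c}_j)$. Summing over $j$ produces $\mbox{\sc Under}\leq OPT(\hat\cI^{c})$.

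The delicate step is the ``max plus averaging'' move: bounding each $(i,j)$ term pointwise by $R_{ij}\leq OPT(\hat\cI^{c}_j)$ would naively lose a factor $n$ across the sum over $i$. Keeping the weight $\tilde X_{ij}$ and then invoking the budget $\sum_i\tilde X_{ij}\leq 1$ is what collapses the $i$-sum to $\max_i R_{ij}$ with no loss, and is why $\mbox{\sc Under}$---stated in terms of raw values $v_{ij}$ rather than virtual values $\phi_{ij}$---can still be absorbed by the virtual-welfare quantity $OPT(\hat\cI^{c})$ in spite of the usual welfare-revenue gap. The other part I expect to require care is verifying the factorization of the inner expectation via the independence of $v_{ij}$ and $t_{-i}$, and explicitly identifying $R_{ij}$ as a bona-fide mechanism's profit in $\hat\cI^{c}_j$ rather than merely an upper bound on a revenue curve.
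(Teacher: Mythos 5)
Your proof is correct, but it takes a genuinely different route from the paper's. The paper constructs a single randomized mechanism $\cM'$ for the COPIES instance $\hat\cI^{c}$: sample a fresh profile $v\sim D$, run $\cM$ on $v$, and use the resulting sale of each item $j$ (to some buyer $i$ at value $v_{ij}$) to set an extra reserve price for item $j$; then run a second-price auction with these reserves on the reported COPIES profile $t$. A two-case analysis (comparing $c_j$ to $P_{ij}(t_{-i})$) shows that, for every sampled $v$ and every $t$, the realized profit of $\cM'$ on item $j$ dominates $(v_{ij}-c_j)\,\bI[v_{ij}<\beta_{ij}(t_{-i})]$, and taking expectations gives $\mbox{\sc Under}\leq PFT(\cM';\hat\cI^{c})\leq OPT(\hat\cI^{c})$. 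Your argument instead marginalizes $\mbox{\sc Under}$ into $\sum_{i,j}\bE_{v_{ij}}[X_{ij}(v_{ij})\,g_{ij}(v_{ij})]$, upper-bounds each $g_{ij}$ pointwise by $R_{ij}$, identifies $R_{ij}$ as (at most) the profit of a posted-price DSIC mechanism on the item-$j$ sub-instance $\hat\cI^{c}_j$ that sells only to the copies $\{(i',j):i'\neq i\}$, and then uses the supply constraint $\sum_i\tilde X_{ij}\leq 1$ inherited from $\cM$'s feasibility to collapse $\sum_i R_{ij}\tilde X_{ij}\leq\max_i R_{ij}\leq OPT(\hat\cI^{c}_j)$, summing over $j$ at the end. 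This avoids building a coupled reserve-price mechanism and is arguably more elementary; the cost is that it leans on the per-item decomposability $OPT(\hat\cI^{c})=\sum_j OPT(\hat\cI^{c}_j)$, which holds here because each copy wants only one item, whereas the paper's construction would port more easily to settings without such a clean product structure. One small imprecision worth fixing: with finite supports, $q_{ij}(v)=\Pr[P_{ij}(t_{-i})>v]$ has a strict inequality, so $R_{ij}$ is not literally equal to $\max_{p\geq c_j}(p-c_j)\Pr[\max_{i'\neq i}v_{i'j}\geq p]$ but is only bounded above by it (pick $p$ just above the maximizer of $g_{ij}$); since you only need $R_{ij}\leq OPT(\hat\cI^{c}_j)$, the argument still goes through.
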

\begin{proof}
It suffices show that
for any BIC-BIR mechanism $\cM$ for $\cI^{c}$,
there is a mechanism $\cM'$ for the single-parameter instance $\hat{\cI}^{c}$,
which achieves profit at least as much as {\sc Under}.

Given $\cM=(x,p)$,
we define $\cM'=(x',p')$ as follows.
Randomly draw $v\sim D$ and run $\cM$ on $v$.
If item $j$ is sold to buyer $i$, let $v_{ij}$ be item $j$'s reserve price in $\cM'$.
Then for instance $\hat{\cI}^{c}$ with reported valuation profile $t$,
run second-price auction with multiple reserve prices for each item.
That is for item $j$, the highest-bid buyer, say $(i_{j},j)$,
gets the opportunity to buy it and the price is $\max\{v_{ij},\beta_{i_{j}j}(t_{-i_{j}})\}$.
If $t_{i_{j}j}<\max\{v_{ij}, \beta_{i_{j}j}(t_{-i_{j}})\}$, item $j$ is kept unsold.

Next we compare the profit of $\cM'$ and {\sc Under}.
Recall that $x_{ij}(v_{i}) = \sum_{v_{-i}\sim T_{-i}} D_{-i}(v_{-i}) x_{ij}(v_{i},v_{-i})$.
Thus {\sc Under} can be rearranged as follows.

\begin{eqnarray*}
&& \mbox{\sc Under} \\
&=& \sum_{i \in  N} \sum_{v_{i} \in T_{i}}
\sum_{j \in  M}
D_{i}(v_{i})
x_{ij}(v_{i})
\cdot \sum_{t_{-i}\in T_{-i}}
D_{-i}(t_{-i})  (v_{ij} - c_j)
\cdot\bI[v_{ij} < \beta_{ij}(t_{-i})] \\
&=& \sum_{i \in  N} \sum_{v \in T}
\sum_{j \in  M} D(v) x_{ij}(v)
\sum_{t\in T}
D(t) (v_{ij} - c_j)
\cdot\bI[v_{ij} < \beta_{ij}(t_{-i})] \\
&=& \sum_{v \in T}
D(v)(
\sum_{i \in  N} \sum_{t \in T} D(t)
\sum_{j \in  M}x_{ij}(v) (v_{ij} - c_j)
\cdot\bI[v_{ij} < \beta_{ij}(t_{-i})] ).
\end{eqnarray*}
It is easy to see that {\sc Under} is maximized by some deterministic allocation.
So we focus on a deterministic allocation rule $x$.

Let us consider the innermost summation.
For any item $j$, denote by $i$ the buyer such that $x_{ij}(v)=1$ and $\bI[v_{ij} < \beta_{ij}(t_{-i})]=1$.
Then $v_{ij}$ is one of the reserve prices for item $j$ in $\cM'$.
Next we see how much profit does mechanism $\cM'$ make from selling item~$j$.

If $c_{j}\geq P_{ij}(t_{-i})$,
$\bI[v_{ij} < \beta_{ij}(t_{-i})]=1$ implies $v_{ij}< c_{j}$.
In $\cM'$, item $j$ will be either sold with price at least $c_{j}$ or unsold,
which means the profit is always nonnegative from selling item $j$ and naturally greater than $v_{ij}-c_{j}<0$.

If $c_{j} < P_{ij}(t_{-i})$,
$\bI[v_{ij} < \beta_{ij}(t_{-i})]=1$ implies $v_{ij}< P_{ij}(t_{-i})$.
Note that the highest bidder for item $j$ (denoted by $(i_{j},j)$) must have value at least $P_{ij}(t_{-i})$.
Since $P_{ij}(t_{-i})>v_{ij}$ and $P_{ij}(t_{-i})>c_{j}$, $(i_{j},j)$ is willing to take item $j$ at price
$\max\{v_{ij},c_{j},P_{i_{j}j}(t_{-i_{j}})\}$ which is at least $v_{ij}$.
Then the profit is at least $v_{ij}-c_{j}$.


Combining the above two cases,
\begin{eqnarray*}
PFT(\cM';\hat{\cI}^{c}) &\geq & \sum_{i \in  N} \sum_{t \in T} D(t)
\sum_{j \in  M}x_{ij}(v) (v_{ij} - c_j)
\cdot\bI[v_{ij} < \beta_{ij}(t_{-i})].
\end{eqnarray*}

Therefore,
\begin{eqnarray*}
\mbox{\sc Under} \leq \sum_{v \in T} D(v)PFT(\cM';\hat{\cI}^{c}) \leq OPT(\hat{\cI}^{c}),
\end{eqnarray*}
which finishes the proof of Lemma \ref{lem:under}.
\end{proof}

$\mbox{\sc Over}$ is also upper-bounded by $OPT(\hat{\cI}^{c})$,
and the proof is almost the same with Lemma 14 of \cite{cai2016duality},
thus omitted here.

\begin{lemma}\label{lem:over}
$\mbox{\sc Over} \leq OPT(\hat{\cI}^{c})$.
\end{lemma}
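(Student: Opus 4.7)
The plan is to show directly that $\mbox{\sc Over} \leq r$, the expected profit of $\cM_{1LA}$ on $\cI^{c}$, and then invoke the fact noted in the preamble that $r \leq PFT(\cM_{IT}; \cI^{c})$; the latter in turn equals $OPT(\hat{\cI}^{c})$ because $\cM_{IT}$ runs Myerson's optimal single-item mechanism on each item, and per-item Myerson is optimal for the single-parameter COPIES instance when the buyers are additive. In contrast to the mechanism-construction used for Lemma \ref{lem:under}, a direct calculation suffices here once one notices that the pointwise bound $x_{ij}(v_{i}) \leq 1$ is the only feasibility fact needed.

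Concretely, I would proceed in three short steps. First, rewrite the definition of $\mbox{\sc Over}$ so that the outer sums become expectations with $v_{i} \sim D_{i}$ and an independent fresh draw $t_{-i} \sim D_{-i}$. Second, apply $x_{ij}(v_{i}) \leq 1$; since the indicator $\bI[v_{ij} \geq \beta_{ij}(t_{-i})]$ depends on $v_{i}$ only through its $j$-th coordinate, marginalizing out the other coordinates yields
\begin{equation*}
\mbox{\sc Over} \;\leq\; \sum_{i \in N}\sum_{j \in M}\; \mathop{\bE}\limits_{t_{-i} \sim D_{-i}}\!\Bigl[(\beta_{ij}(t_{-i}) - c_{j})\cdot \Pr_{y \sim D_{ij}}[y \geq \beta_{ij}(t_{-i})]\Bigr].
\end{equation*}
Third, since $\beta_{ij}(t_{-i}) \geq c_{j}$ by definition, the choice $p = \beta_{ij}(t_{-i})$ lies in the feasible range of the maximum defining $r_{ij}(t_{-i})$, so the bracketed summand is at most $r_{ij}(t_{-i})$. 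Summing and swapping sums with the expectation gives $\sum_{i,j} \bE_{t_{-i}}[r_{ij}(t_{-i})] = \sum_{i} r_{i} = r$, completing the chain $\mbox{\sc Over} \leq r \leq PFT(\cM_{IT}; \cI^{c}) = OPT(\hat{\cI}^{c})$.

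I do not expect any hard obstacle; the only care needed is bookkeeping. In particular, one must check that the marginalization in the second step is valid (the integrand depends on $v_{i}$ only through $v_{ij}$ once $x_{ij}$ has been dropped), that the calculation never invokes cross-buyer feasibility of $\cM$ (so it is sound to pass from $\sum_{i} x_{ij}(v_{i}) \leq 1$ to the coordinatewise $x_{ij}(v_{i}) \leq 1$), and that the terminating sum recovers exactly $r$ rather than a rescaled version. This is the natural adaptation of the corresponding argument in \cite{cai2016duality}, where the production cost $c_{j}$ is absorbed both into the reserve $\beta_{ij}$ and into the maximization defining $r_{ij}$.
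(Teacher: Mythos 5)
Your proof is correct, but it takes a genuinely different route from the one the paper points to. The paper omits the argument entirely and defers to Lemma 14 of \cite{cai2016duality}, which is a mechanism-construction bound: one exhibits a feasible mechanism for the COPIES instance $\hat{\cI}^{c}$ whose profit dominates $\mbox{\sc Over}$ (essentially a second-price auction with reserve $c_j$ per item, using that at most one buyer per item can clear the threshold $\beta_{ij}(\cdot)$ on a common profile), so that $\mbox{\sc Over}\leq OPT(\hat{\cI}^{c})$ directly. You instead discard the allocation via $x_{ij}(v_{i})\leq 1$, marginalize (valid, since after dropping $x_{ij}$ the integrand depends on $v_{i}$ only through $v_{ij}$ and $D_i$ is a product distribution), and bound each term by $r_{ij}(t_{-i})$ because $\beta_{ij}(t_{-i})\geq c_j$ makes $p=\beta_{ij}(t_{-i})$ feasible in the maximization defining $r_{ij}$; this yields $\mbox{\sc Over}\leq r$, which is precisely the device the paper itself uses for $\mbox{\sc Tail}$ in Lemma \ref{lem:tail}. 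The chain is then closed by the paper's stated facts $r=PFT(\cM_{1LA};\cI^{c})\leq OPT(\hat{\cI}^{c})=PFT(\cM_{IT};\cI^{c})$; note that your argument therefore inherits the claim $r\leq OPT(\hat{\cI}^{c})$, which the paper asserts without proof (though it already relies on it to convert the $\mbox{\sc Tail}$ bound into the final accounting), whereas the CDW-style proof does not need it. Since the final theorem only uses $\mbox{\sc Over}\leq PFT(\cM_{IT};\cI^{c})$, your slightly weaker intermediate bound costs nothing in the approximation factor, and it has the advantage of being a short self-contained calculation rather than an appeal to an external lemma.
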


Next, we use $PFT(\cM_{1LA};\cI^{c})$ and $PFT(\cM_{BVCG};\cI^{c})$ to bound {\sc Tail} and {\sc Core}.
Recall that $PFT(\cM_{1LA};\cI^{c})=r \leq OPT(\hat{\cI}^{c})=PFT(\cM_{IT}; \cI^{c})$.

\begin{lemma}\label{lem:tail}
$\mbox{\sc Tail} \leq r$.
\end{lemma}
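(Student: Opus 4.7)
The plan is to follow the standard tail-bound argument from the Cai--Devanur--Weinberg duality framework, adapted to incorporate the production cost via the modified definition of $r_{ij}(t_{-i}) = \max_{p \geq \beta_{ij}(t_{-i})}(p - c_j)\Pr[y \geq p]$. The key insight is that $\beta_{ij}(t_{-i}) \geq c_j$ (by its definition), so the cost can be absorbed into the ``slack'' of the threshold, which lets the standard manipulations go through.

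First, I would apply a union bound to the probability inside {\sc Tail}:
\[
\Pr_{v_{i,-j}}\!\bigl[\exists k\neq j,\ v_{ik}-\beta_{ik}(t_{-i}) \geq v_{ij}-\beta_{ij}(t_{-i})\bigr]
\leq \sum_{k\neq j}\Pr_{v_{ik}}\!\bigl[v_{ik}\geq \beta_{ik}(t_{-i}) + (v_{ij}-\beta_{ij}(t_{-i}))\bigr].
\]
Next, I would use the defining maximization of $r_{ik}(t_{-i})$ with $p = \beta_{ik}(t_{-i}) + (v_{ij}-\beta_{ij}(t_{-i}))$. Since $\beta_{ik}(t_{-i}) \geq c_k$, we get $p - c_k \geq v_{ij} - \beta_{ij}(t_{-i}) > 0$, so
\[
(v_{ij}-\beta_{ij}(t_{-i}))\cdot\Pr_{v_{ik}}\!\bigl[v_{ik}\geq p\bigr]
\leq (p-c_k)\Pr_{v_{ik}}\!\bigl[v_{ik}\geq p\bigr]
\leq r_{ik}(t_{-i}).
\]
Summing over $k\neq j$ bounds the inner $(v_{ij}-\beta_{ij}(t_{-i}))\cdot(\text{prob})$ factor by $\sum_{k\neq j} r_{ik}(t_{-i}) \leq r_i(t_{-i})$.

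After this substitution, {\sc Tail} is at most
\[
\sum_{i\in N}\sum_{j\in M}\sum_{t_{-i}} D_{-i}(t_{-i})\cdot r_i(t_{-i})\cdot \Pr_{v_{ij}}\!\bigl[v_{ij} > \beta_{ij}(t_{-i}) + r_i(t_{-i})\bigr].
\]
To finish, I would show $\sum_{j\in M}\Pr_{v_{ij}}[v_{ij} > \beta_{ij}(t_{-i}) + r_i(t_{-i})] \leq 1$. This again uses the definition of $r_{ij}(t_{-i})$: choosing $p = \beta_{ij}(t_{-i})+r_i(t_{-i})$ gives $p-c_j\geq r_i(t_{-i})$, so
\[
r_{ij}(t_{-i}) \geq r_i(t_{-i})\cdot \Pr_{v_{ij}}\!\bigl[v_{ij}\geq \beta_{ij}(t_{-i})+r_i(t_{-i})\bigr].
\]
Summing over $j$ and dividing by $r_i(t_{-i})$ (handling the harmless edge case $r_i(t_{-i})=0$ separately, in which case {\sc Tail}'s outer sum ranges over an empty set) yields $\sum_j \Pr[\cdot] \leq \sum_j r_{ij}(t_{-i})/r_i(t_{-i}) = 1$. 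Plugging this in collapses the expression to $\sum_i \mathbb{E}_{t_{-i}}[r_i(t_{-i})] = \sum_i r_i = r$, as desired.

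The only subtlety compared to the cost-free analogue in \cite{cai2016duality} is ensuring that every application of the defining inequality for $r_{ik}(t_{-i})$ uses a feasible $p$, namely one satisfying $p\geq \beta_{ik}(t_{-i})$, and that the gap $p-c_k$ dominates the quantity we wish to bound; both are handled cleanly by the observation $\beta_{ik}(t_{-i})\geq c_k$. I expect this to be the only place where care is needed; the rest is a direct transcription of the standard calculation.
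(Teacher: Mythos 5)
Your proposal is correct and follows essentially the same chain of inequalities as the paper's proof: union bound over $k\neq j$, then the feasibility of $p=\beta_{ik}(t_{-i})+(v_{ij}-\beta_{ij}(t_{-i}))$ together with $\beta_{ik}(t_{-i})\geq c_k$ to bound each term by $r_{ik}(t_{-i})$, then $\sum_{k\neq j}r_{ik}\leq r_i$, and finally one more application of the defining maximization for $r_{ij}$ at $p=\beta_{ij}(t_{-i})+r_i(t_{-i})$ using $\beta_{ij}(t_{-i})\geq c_j$. The only cosmetic difference is that you fold the last step into the statement $\sum_j\Pr[\cdot]\leq 1$ whereas the paper bounds $r_i(t_{-i})\Pr[\cdot]\leq r_{ij}(t_{-i})$ term by term; these are the same computation.
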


\begin{proof}
\begin{eqnarray*}
&& \mbox{\sc Tail} \\
&\leq&
\sum_{i \in  N} \sum_{j \in  M}
\sum_{t_{-i}\in T_{-i}} D_{-i}(t_{-i})
\sum_{v_{ij} > \beta_{ij}(t_{-i}) + r_i(t_{-i})}
D_{ij}(v_{ij}) 
\cdot (v_{ij} - \beta_{ij}(t_{-i}) + \beta_{ik}(t_{-i}) - c_k) \\
 &&~~~~~
 \quad\quad \cdot \Pr_{v_{i,-j} \sim D^B_{i,-j}}
[\exists k \neq j, v_{ik} - \beta_{ik}(t_{-i})
\geq v_{ij} - \beta_{ij}(t_{-i})] \\
&\leq&
\sum_{i \in  N} \sum_{j \in  M}
\sum_{t_{-i}\in T_{-i}} D_{-i}(t_{-i})
\sum_{v_{ij} > \beta_{ij}(t_{-i}) + r_i(t_{-i})}
D_{ij}(v_{ij}) 
\cdot (v_{ij} - \beta_{ij}(t_{-i}) + \beta_{ik}(t_{-i}) - c_k) \\
&&~~~~~
 \quad\quad  \cdot \sum_{k\neq j} \Pr_{v_{ik} \sim D^B_{ik}}
[v_{ik} - \beta_{ik}(t_{-i})
\geq v_{ij} - \beta_{ij}(t_{-i})] \\
&\leq&
\sum_{i \in  N} \sum_{j \in  M}
\sum_{t_{-i}\in T_{-i}} D_{-i}(t_{-i})
\cdot\sum_{v_{ij} > \beta_{ij}(t_{-i}) + r_i(t_{-i})}
D_{ij}(v_{ij})
\sum_{k\neq j} r_{ik}(t_{-i}) \\
&\leq&
\sum_{i \in  N} \sum_{j \in  M}
\sum_{t_{-i}\in T_{-i}} D_{-i}(t_{-i}) r_{i}(t_{-i})
\cdot\sum_{v_{ij} > \beta_{ij}(t_{-i}) + r_i(t_{-i})} D_{ij}(v_{ij}) \\
&\leq&
\sum_{i \in  N} \sum_{j \in  M}
\sum_{t_{-i}\in T_{-i}} D_{-i}(t_{-i})
(r_{i}(t_{-i}) + \beta_{ij}(t_{-i}) -c_{j})
\cdot \sum_{v_{ij} > \beta_{ij}(t_{-i}) + r_i(t_{-i})}
D_{ij}(v_{ij}) \\
&\leq&
\sum_{i \in  N} \sum_{j \in  M}
\sum_{t_{-i}\in T_{-i}} D_{-i}(t_{-i})
r_{ij}(t_{-i})  = r.\\
\end{eqnarray*}
The first and fifth inequalities are because  $\beta_{ik}(t_{-i}) \geq c_k$ for all $k\in M$.
The second inequality is by union bound.
The third and sixth inequalities hold by the definition of $r_{ik}(t_{-i})$.
\end{proof}

\begin{lemma}\label{lem:core}
$\mbox{\sc Core} \leq 2r + 2PFT(\cM_{BVCG}; \cI^c)$.
\end{lemma}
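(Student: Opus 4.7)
\textbf{Proof proposal for Lemma \ref{lem:core}.}
My plan is to adapt the Core-bound argument from the duality framework of \cite{cai2016duality}, making the necessary modifications to account for production costs. The first step is to rewrite Core on a per-buyer basis. For each buyer $i$ and sub-profile $v_{-i}$, define the random variable (over $v_i \sim D_i$)
\[
X_i \;=\; \sum_{j \in M} (v_{ij} - \beta_{ij}(v_{-i}))^+ \cdot \bI\!\left[v_{ij} \leq \beta_{ij}(v_{-i}) + r_i(v_{-i})\right],
\]
and let $Y_i = \sum_j (v_{ij} - \beta_{ij}(v_{-i}))^+$, so that $X_i \leq Y_i$ pointwise and $e_i(v_{-i})$ is the median of $Y_i$. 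Because the buyers are additive, the summands of $X_i$ are independent across $j$ for fixed $v_{-i}$, and the double sum in the definition of $\mbox{\sc Core}$ factors so that $\mbox{\sc Core} = \sum_i \bE_{v_{-i}}\bE_{v_i}[X_i]$.

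Next I would record the three properties that drive the entire argument: (i) each summand of $X_i$ lies in $[0,r_i(v_{-i})]$ and the summands are independent, so $\text{Var}(X_i) \leq r_i(v_{-i})\, \bE[X_i]$; (ii) $X_i\leq Y_i$, so the median of $X_i$ is at most $e_i(v_{-i})$; (iii) since $\beta_{ij}(v_{-i}) \geq c_j$, the expected profit $\cM_{BVCG}$ extracts from buyer $i$ conditional on $v_{-i}$ is at least $e_i(v_{-i})\cdot \Pr[Y_i\geq e_i(v_{-i})] \geq e_i(v_{-i})/2$, and the item-reserve surplus $\sum_{j: v_{ij}\geq \beta_{ij}}(\beta_{ij}-c_j)$ only adds to this.

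The heart of the proof is the per-buyer inequality
\[
\bE_{v_i}[X_i] \;\leq\; 2\, r_i(v_{-i}) \;+\; 2\cdot \mathrm{PFT}_i\!\left(\cM_{BVCG}\mid v_{-i}\right),
\]
which, once summed over $i$ and integrated over $v_{-i}$, yields the lemma. I would establish it by a case split on whether $\bE[X_i]$ is small or large compared to $r_i(v_{-i})$. In the small case the $2r_i$ term directly absorbs $\bE[X_i]$. In the large case, apply Chebyshev's inequality to $X_i$ using property~(i): this forces the median of $X_i$ to be within a constant factor of $\bE[X_i]$, hence by~(ii) the entry fee $e_i(v_{-i})$ is at least a constant fraction of $\bE[X_i]$, and by~(iii) $\mathrm{PFT}_i(\cM_{BVCG}\mid v_{-i})$ is also a constant fraction of $\bE[X_i]$. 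A clean way to do this pointwise is to note that property (i), combined with $\Pr[X_i\leq e_i(v_{-i})]\geq \Pr[Y_i\leq e_i(v_{-i})]\geq \tfrac12$, yields $(\bE[X_i]-e_i(v_{-i}))^2 \leq 2 r_i(v_{-i})\bE[X_i]$, from which one can solve for $\bE[X_i]$.

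The main obstacle will be producing the stated constants exactly: a naive Chebyshev bound gives $\bE[X_i]\leq c_1 r_i + c_2 e_i$ with $c_1,c_2$ larger than required, leading to a worse overall approximation. To tighten to the advertised $2r + 2\cdot \mathrm{PFT}(\cM_{BVCG})$ I would use two refinements: first, credit the item-reserve surplus part of the BVCG profit (not only the entry-fee part), which improves the multiplier from $e_i$ to $\mathrm{PFT}_i(\cM_{BVCG}\mid v_{-i})$; and second, choose the threshold in the case split so that the Chebyshev inequality is applied only at the regime where it is tight, avoiding the slack introduced by AM-GM. Putting these together yields the desired $2r+2\cdot\mathrm{PFT}(\cM_{BVCG};\cI^c)$.
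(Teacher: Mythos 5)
Your high-level decomposition matches the paper's: set $X_i = \sum_j d_{ij}$ (the truncated surplus, with $d_{ij} = b_{ij}\bI[b_{ij}\le r_i]$ and $b_{ij} = (v_{ij}-\beta_{ij}(t_{-i}))^+$), note $X_i \le Y_i=\sum_j b_{ij}$ with $e_i(t_{-i})$ the median of $Y_i$, and compare $\bE[X_i]$ to $e_i$ via concentration. The gap is in the concentration step. Your variance estimate $\mathrm{Var}(X_i)\le r_i(t_{-i})\,\bE[X_i]$ uses only that each $d_{ij}\in[0,r_i]$; with $\Pr[X_i\le e_i]\ge \frac12$ and Cantelli this gives $(\bE[X_i]-e_i)^2\le r_i\,\bE[X_i]$, which after solving yields at best $\bE[X_i]\le r_i + 2e_i$. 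The coefficient on $e_i$ is $2$, not $1$, and since $e_i(t_{-i})\le 2\,PFT_i(\cM_{BVCG}\mid t_{-i})$ can be tight (the item-reserve surplus may be zero), the best this route produces is $\mbox{\sc Core}\le r + 4\,PFT(\cM_{BVCG};\cI^c)$, not the stated $2r+2\,PFT(\cM_{BVCG};\cI^c)$. Neither of your proposed refinements closes the gap: crediting the item-reserve surplus only re-derives $PFT_i\ge e_i/2$, and avoiding AM-GM slack cannot push the coefficient of $e_i$ below $2$, because that factor is forced by the weak variance bound itself (take $e$ small and the exact solution of $(\mu-e)^2\le r\mu$ still has $\mu$ up to $2e+O(r)$ in the relevant regime).

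The paper's proof invokes Lemma~12 of \cite{cai2016duality}, which uses an additional property you are not exploiting: each $b_{ij}(t_{-i})$ is \emph{revenue-dominated} by $r_{ij}(t_{-i})$, i.e.\ $x\cdot\Pr_{v_{ij}\sim D_{ij}}[b_{ij}(t_{-i})\ge x]\le r_{ij}(t_{-i})$ for all $x>0$. This holds because posting price $\beta_{ij}(t_{-i})+x$ gives per-sale profit $\beta_{ij}(t_{-i})+x-c_j\ge x$ (using $\beta_{ij}\ge c_j$), and $r_{ij}(t_{-i})$ is by definition the maximum such posted-price profit above $\beta_{ij}$. Integrating the tail gives the \emph{absolute} estimate $\mathrm{Var}(d_{ij}(t_{-i}))\le \bE[d_{ij}^2]\le 2\,r_{ij}(t_{-i})\,r_i(t_{-i})$, hence $\mathrm{Var}(X_i)\le 2\,r_i(t_{-i})^2$, independently of $\bE[X_i]$. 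Chebyshev then yields $\Pr[X_i<\bE[X_i]-2r_i(t_{-i})]\le \frac12$, so $e_i(t_{-i})\ge \bE[X_i]-2r_i(t_{-i})$, i.e.\ $\bE[X_i]\le e_i+2r_i$ with coefficient $1$ on $e_i$. That coefficient is exactly what delivers $\mbox{\sc Core}\le 2r+2\,PFT(\cM_{BVCG};\cI^c)$; without the revenue-domination observation the lemma as stated is out of reach.
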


\begin{proof}
For the random variable $v_{ij}\sim D_{ij}$, define the following two new random variables:
$b_{ij}(t_{-i}) = (v_{ij} - \beta_{ij}(t_{-i}))
\bI [v_{ij} \geq \beta_{ij}(t_{-i})]$
and $d_{ij}(t_{-i}) = b_{ij}(t_{-i})
\bI [b_{ij}(t_{-i}) \leq r_i(t_{-i})]$.
Note that $\sum_j b_{ij}(t_{-i})$ is buyer $i$' utility in the VCG mechanism with reserve price $c$.
Then
$$\mbox{\sc Core} =
\sum_{i \in  N} \sum_{j \in  M}
\sum_{t_{-i}\in T_{-i}} D_{-i}(t_{-i})
\mathop{\bE}\limits_{v_i \sim D^B_i} d_{ij}(t_{-i}).
$$
Set $\hat{e}_i(t_{-i}) = \sum_{j\in  M}\mathop{\bE}\limits_{v_i \sim D^B_i} [d_{ij}(t_{-i})]
- 2 r_i(t_{-i})$.
By Lemma 12 in~\cite{cai2016duality},  
$$
\Pr[\sum_{j \in  M} d_{ij}(t_{-i}) \leq \hat{e}_i(t_{-i})]
\leq \frac{1}{2}.
$$
Thus 
$$
\Pr[\sum_{j \in  M} b_{ij}(t_{-i}) \geq \hat{e}_i(t_{-i})]
\geq \Pr[\sum_{j \in  M} d_{ij}(t_{-i}) \geq \hat{e}_i(t_{-i})]
\geq \frac{1}{2}.
$$

Note that the entry fee $e_i(v_{-i})$ is the median of the random variable $\sum_{j\in  M}(v_{ij} - \beta_{ij}(v_{-i}))^{+}$,
then $e_i(v_{-i}) \geq \hat{e}_i(v_{-i})$ and
$\Pr[\sum_{j \in  M} b_{ij}(t_{-i}) \geq e_i(t_{-i})] = \frac{1}{2}$.
Therefore,
\begin{equation*}
PFT(\cM_{BVCG}; \cI^c) \geq \frac{1}{2}\sum_{i \in  N}
\sum_{t_{-i}\in T_{-i}} D_{-i}(t_{-i}) \hat{e}_i(t_{-i}) 
= \frac{\mbox{\sc Core}}{2} - r. \qedhere
\end{equation*}
\end{proof}

Finally, we obtain the main theorem of this section.
\setcounter{theorem}{1}
\begin{theorem}
\label{thm:cost}
When the buyers have additive valuations,
Mechanism $\cM_{A}$ is DSIC and is an 8-approximation to the optimal profit for production-cost markets.
\end{theorem}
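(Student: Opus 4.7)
\noindent\textbf{Proof plan for Theorem \ref{thm:cost}.}

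The plan is to adapt the duality-based revenue decomposition of \cite{cai2016duality} to production-cost markets, charge each term of the decomposition to either $\cM_{IT}$ or $\cM_{BVCG}$, and then combine using the $3/4$ vs.\ $1/4$ split in $\cM_A$. DSIC is immediate: both sub-mechanisms are DSIC (as noted right after Mechanism~\ref{alg:bvcg}), and a random convex combination of DSIC mechanisms is DSIC. So the real content is the approximation factor. Since $\cM_{BVCG}$ is only BIC-BIR in general but DSIC here (because the reserve $\beta_{ij}$ and entry fee $e_i$ are $v_i$-independent), it suffices to upper bound the optimal BIC-BIR profit $OPT(\cI^c)$ as in \cite{cai2016duality} and compare it to the expected profit of $\cM_A$.

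First I would invoke Lemma~\ref{lem:duality}, which writes $OPT(\cI^c) \leq \mbox{\sc Single}+\mbox{\sc Under}+\mbox{\sc Over}+\mbox{\sc Tail}+\mbox{\sc Core}$, where the reserve $\beta_{ij}(t_{-i}) = \max\{P_{ij}(t_{-i}),c_j\}$ already incorporates the production cost, and Myerson's virtual value $\phi_{ij}$ appears in \textsc{Single}. Then I would bound each of the first three terms by the single-parameter COPIES optimum $OPT(\hat\cI^c)$ (which equals $PFT(\cM_{IT};\cI^c)$ because buyer valuations are additive). These are Lemmas~\ref{lem:single}, \ref{lem:under}, \ref{lem:over}: \textsc{Single} is bounded by noting that any $\cM$-feasible allocation is also feasible in the COPIES setting, and the maximum of $\sum x_{ij}(\phi_{ij}-c_j)$ over feasible allocations is exactly $OPT(\hat\cI^c)$; \textsc{Under} is bounded by constructing a COPIES mechanism that uses $\cM$'s realized allocations as item-specific reserves and then runs a second-price auction with multiple reserves, where the key observation is that in the event $v_{ij}<\beta_{ij}(t_{-i})$ the charged price is at least $v_{ij}$, yielding profit $\geq v_{ij}-c_j$; \textsc{Over} is analogous to \cite{cai2016duality}.

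Next I would bound \textsc{Tail} by $r:=PFT(\cM_{1LA};\cI^c)$ via Lemma~\ref{lem:tail}, using a union bound and the definition of $r_{ij}(t_{-i})=\max_{p\geq \beta_{ij}(t_{-i})}(p-c_j)\Pr[y\geq p]$, together with $\beta_{ik}(t_{-i})\geq c_k$. Since $\cM_{1LA}$ is a particular single-parameter mechanism, $r\leq OPT(\hat\cI^c)=PFT(\cM_{IT};\cI^c)$. Then for \textsc{Core} I would follow Lemma~\ref{lem:core}: define truncated utilities $b_{ij}(t_{-i})=(v_{ij}-\beta_{ij}(t_{-i}))^+$ and $d_{ij}(t_{-i})=b_{ij}(t_{-i})\mathbf{1}[b_{ij}\leq r_i(t_{-i})]$, set $\hat e_i(t_{-i})=\sum_j \mathbb E[d_{ij}(t_{-i})] - 2r_i(t_{-i})$, and apply the second-moment argument (Lemma~12 of \cite{cai2016duality}) to show $\Pr[\sum_j b_{ij}\geq \hat e_i]\geq 1/2$. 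Since $\cM_{BVCG}$'s entry fee is the median of $\sum_j b_{ij}$, it exceeds $\hat e_i$, giving $PFT(\cM_{BVCG};\cI^c)\geq \tfrac{1}{2}\,\mbox{\sc Core} - r$, i.e., $\mbox{\sc Core}\leq 2r + 2\,PFT(\cM_{BVCG};\cI^c)$.

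Finally I would combine:
\begin{equation*}
OPT(\cI^c) \leq 3\,OPT(\hat\cI^c) + r + 2r + 2\,PFT(\cM_{BVCG};\cI^c) \leq 6\,PFT(\cM_{IT};\cI^c) + 2\,PFT(\cM_{BVCG};\cI^c),
\end{equation*}
and since $PFT(\cM_A;\cI^c) = \tfrac{3}{4}PFT(\cM_{IT};\cI^c) + \tfrac{1}{4}PFT(\cM_{BVCG};\cI^c)$, this gives $OPT(\cI^c)\leq 8\,PFT(\cM_A;\cI^c)$, as desired. The main obstacle is that every step of the \cite{cai2016duality} decomposition has to be re-examined to verify that the production cost $c_j$ enters the right places: specifically, that $\beta_{ij}=\max\{P_{ij},c_j\}$ does not break the \textsc{Under} and \textsc{Over} bounds (since one must separately handle the case $c_j\geq P_{ij}(t_{-i})$ where the charged price from the COPIES mechanism is at least $c_j$, yielding nonnegative surplus even though $v_{ij}-c_j$ could be negative), and that the definition of $r_{ij}$ using $(p-c_j)$ instead of $p$ is still compatible with the tail and core inequalities.
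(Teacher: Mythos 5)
Your proposal is correct and follows essentially the same route as the paper: the duality decomposition of Lemma \ref{lem:duality}, charging \textsc{Single}, \textsc{Under}, \textsc{Over} to $OPT(\hat\cI^c)=PFT(\cM_{IT};\cI^c)$, \textsc{Tail} to $r\leq PFT(\cM_{IT};\cI^c)$, and \textsc{Core} to $2r+2PFT(\cM_{BVCG};\cI^c)$, then combining via the $\frac{3}{4}$/$\frac{1}{4}$ mixture to get the factor $8$. Your remarks on where the production cost enters (the reserves $\beta_{ij}$, the $(p-c_j)$ in $r_{ij}$, and the case split on $c_j$ versus $P_{ij}(t_{-i})$ in the \textsc{Under} bound) match the paper's treatment exactly.
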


\begin{proof}
Combining Lemma \ref{lem:duality} with Lemmas \ref{lem:single}, \ref{lem:under}, \ref{lem:over}, \ref{lem:tail} and \ref{lem:core},
for any BIC mechanism $\cM=(x,p)$ for production-cost markets and any production-cost market instance $\cI^{c}=(N,M,D,c)$,
$$
PFT(\cM;\cI^{c}) \leq 6 PFT(\cM_{IT};\cI^{c})+2PFT(\cM_{BVCG}; \cI^c).
$$
By selling items using $\cM_{IT}$  with probability $\frac{3}{4}$ and using
$\cM_{BVCG}$ with probability $\frac{1}{4}$, we have an 8-approximation to the optimal profit.
\end{proof}

\end{document}